\documentclass[11pt]{article}
\usepackage{fullpage}
\usepackage[utf8]{inputenc}
\usepackage[english]{babel}
\usepackage[T1]{fontenc}
\usepackage{amsmath,amssymb,amsfonts,amsthm}
\usepackage{url}
\usepackage{hyperref}
\usepackage{cleveref}
\usepackage{graphicx}
\usepackage{makecell}
\usepackage{algorithm}
\usepackage{algpseudocode}
\usepackage{tikz}
\usepackage{physics}
\usepackage{mathcommands}

\usepackage[mode=buildnew,subpreambles=true]{standalone}

\usepackage[backend=bibtex,doi=false,isbn=false,url=false,maxbibnames=5,sorting=none]{biblatex}
\AtEveryBibitem{\clearfield{month}}
\AtEveryBibitem{\clearfield{day}}
\addbibresource{main.bib}


\newtheorem{theorem}{Theorem}
\newtheorem*{theorem*}{Theorem}
\newtheorem{lemma}[theorem]{Lemma}
\newtheorem{proposition}[theorem]{Proposition}

\newtheorem{corollary}[theorem]{Corollary}

\newtheorem{definition}[theorem]{Definition}

\usepackage{authblk}


\newcommand{\card}[1]{\lvert #1 \rvert}

\usepackage{xcolor}

\title{On combinatorial structures in linear codes}

\author[1]{Nou\'edyn Baspin}

\affil[1]{Centre for Engineered Quantum Systems, School of Physics, University of Sydney, Sydney, NSW 2006, Australia}


\begin{document}
\maketitle

\begin{abstract}
   In this work we show that given a connectivity graph $G$ of a $[[n,k,d]]$ quantum code, there exists $\{K_i\}_i, K_i \subset G$, such that $\sum_i |K_i|\in \Omega(k), \  |K_i| \in \Omega(d)$, and the $K_i$'s are $\Omegalog( \sqrt{{k}/{n}})$-expander. If the codes are classical we show instead that the $K_i$'s are $\Omegalog\left({{k}/{n}}\right)$-expander. We also show converses to these bounds. In particular, we show that the BPT bound for classical codes is tight in all Euclidean dimensions. Finally, we prove structural theorems for graphs with no "dense" subgraphs which might be of independent interest.
\end{abstract}

\setcounter{tocdepth}{2}
\tableofcontents

\section{Introduction}
	
	A linear\footnote{Any classical code mentioned henceforth will be assumed to be linear.} classical code on $n$ bits encoding $k$ logical bits is defined as a subspace $\cC \cong \F_2^k$ of $\F_2^n$. Given any such subspace, it is possible to find by gaussian elimination a matrix $\pcm : \F_2^n \rightarrow \F_2^{n-k}$ such that $\cC = \ker \pcm$. We say that $\code$ has distance $d$ if all $c \in \code$ satisfy $|c| \geq d$, where $|c|$ is the Hamming weight of $c$.
	Those rows span $\cC^\perp \subset (\F_2^n)^*$ the subspace of the dual space such that  $c \in \code$ iff for for all $v \in \code^{\perp}$, we have $\langle v, c\rangle = 0$. These rows -- also called the "checks" of a code -- have the natural interpretation of being a measurement: $\langle v, c\rangle = 0$ iff the values of the bits indexed by $v$ sum to $0$. The set of the outcomes of those measurements for a vector $c$ are called the \emph{syndrome} of $c$, and corresponds to the vector $\pcm c$.

    Error correcting codes an also be defined for quantum systems. For a finite-dimensional system with $n$ identical particles, and Hilbert space $\hbt^{\otimes n}$, a quantum code is a subspace $\code \subset \hbt^{\otimes n}$. One says that $\cC$ protects $k$ logical qubits if $\dim(\cC) = 2^k$, and has distance $d$ if for any $\ket{\psi} \in \code$, one can recover $\ket{\psi}$ from $\tr_{A} \dyad{\psi}{\psi}$ whenever $A$ consists of strictly less than $d$ particles. In this work we will focus on stabilizer which correspond to the quantum analog of classical linear codes. Given an abelian subgroup $\cS$ of the $n$-qubit Pauli group, a stabilizer code $\cC$ corresponds to the common $+1$ eigenspace of these operators. In that case, a set of checks for this code is a set of Pauli operators that generate
    $\cS$, and can also be understood as a set of measurement one can perform to verify whether the system is in a codestate $\ket{\psi} \in \cC$.

    It is then natural to consider the following question: is it possible to produce codes with good performances while only having access to a restricted class of measurements. In \cite{bravyi2010tradeoffs} it was shown that given a classical code $\cC$ whose checks be locally embedded into $\E^D$, then this code has to obey $k d^{1/D}\in O(n)$. One can verify that saturating this bound is equivalent to finding an $\param{n}{n^{(D-1)/D}}{ n}$ family of codes. This bound is saturated by the repetition code in 1D, however, to the best knowledge of the author, the problem of saturating this bound in dimensions $D \geq 2$ is open.

    Similar bounds also exist for stabilizer codes: this kind of code when their checks are local in $\E^D$ has to obey $k d^{2/(D-1)}\in O(n)$, $d \in O(n^{(D-1)/D})$. In which case, saturating these bounds correspond to finding a family of $[\param{n}{n^{(D-2)/D}}{ n^{(D-1)/D}}]$. Showing that for all $D \geq 1$ those bounds are tight -- or alternatively finding tighter bounds -- is an open question in quantum error correction. The existence of those bounds, and the difficulty to saturate or tighten them, seems to imply that the interplay between locality and error correcting codes is non-trivial: for example, what is the unique property of $\E^3$ that might allow us to build better codes than in $\E^2$?
    
    In \cite{baspin2021connectivity, baspin2023improved} , the authors use some graph theoretical tools -- primarily the \emph{separation profile} --  to extend the BPT bounds to codes not living in $\E^D$. These bounds, although they do not allow to completely recover the original BPT bounds, have the benefit of being applicable to {any} code, in a geometry-oblivious fashion.
    Although this approach works very well as long as the separation profile is polynomial, it quickly becomes unpractical when it is not. 
     
    In this paper we will argue that instead the notion of expansion is a more natural metric, and applies more universally. As a starting point, we will consider Proposition 17 of the Supplemental Material of \cite{baspin2021quantifying}, which guarantees the existence of some expanders in any connectivity graph of any code with high $k,d$. For a given set of checks for a classical or quantum code $\cC$, the connectivity graph has the (qu)bits of $\cC$ as vertices, and two vertices are linked by an edge if the associated (qubits) are in the support of the same check. The proposition reads:
	
	\begin{proposition}
		
		There exists constants $\beta, n_0$, such that for any quantum code $[[n,k,d]]$ satisfying $n \geq n_0$, let $G$ be a connectivity graph for the code, then we have
		
		\begin{enumerate}
			\item There exists $ K \subset G$, such that $|K| \geq d/2$, and $K$ is $\Omega(d/n)$-expander.
			\item There exists $K \subset G$, such that $|K| \geq \frac{d}{2}\sqrt{\frac{k}{\beta n\log(n)^2}}^{1/\log_n(d)}$, and $K$ is $\frac{1}{3}\sqrt{\frac{k}{\beta n\log(n)^2}}^{1/\log_n(d)}$-expander.
			\item If $k(n)d(n)^{2/D} \geq \beta n\log(n)^2/(1-\alpha) $, for $\alpha \in (0,1)$, then there exists $\{K_i\}_i, K_i \subset G$, such that $\sum_i |K_i| > \alpha k \sqrt{\frac{(1-\alpha)k}{\beta n\log(n)^2}}^{1/\log_n(d)}$, $|K_i| \geq  \frac{d}{2}\sqrt{\frac{(1-\alpha)k}{\beta n\log(n)^2}}^{1/\log_n(d)}$, and $K_i$ is $\frac{1}{3}\sqrt{\frac{(1-\alpha)k}{\beta n\log(n)^2}}^{1/\log_n(d)}$-expander. 
		\end{enumerate}
		
	\end{proposition}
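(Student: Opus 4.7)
The plan is to establish the three items in sequence, each building on the previous. The unifying mechanism is that a low-weight logical operator of the code forces the connectivity graph $G$ to contain a ``cut-resistant'' subset of qubits, and that a large $k$ produces many such subsets simultaneously. Item 1 sets up a base expander from a single minimum-weight logical; Item 2 iteratively refines this into stronger expansion by exploiting the surplus in $k/n$; Item 3 then harvests many disjoint expanders by iterating Item 2 on residual codes.

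For Item 1, I would take a minimum-weight logical operator $\ell$ of $\cC$ and let $K$ be (a subset of) $\mathrm{supp}(\ell)$ with $|K| \geq d/2$. For any $T \subset K$ with $|T| \leq |K|/2$, if the edge boundary $|\partial T|$ in $G$ were too small, one could multiply $\ell$ by a product of stabilizers acting across the cut to produce a logical of strictly smaller weight than $d$, contradicting minimality of $\ell$. Normalising carefully by the fact that each check contributes at most $O(n)$ edges to $G$ in the worst case, the resulting boundary-to-size ratio lands at $\Omega(d/n)$, which is exactly the claimed expansion.

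For Item 2, the idea is to bootstrap Item 1 via iterative refinement. Starting with $K^{(0)}$ of size $\geq d/2$ and expansion $\Omega(d/n)$, I repeatedly ask whether $K^{(j)}$ already realises the target expansion $\lambda := \sqrt{k/(\beta n \log(n)^2)}^{1/\log_n(d)}$; if not, a sparse balanced cut of $K^{(j)}$ exists, and I descend into the larger side $K^{(j+1)}$, arguing that it still contains the support of a non-trivial logical (up to stabiliser equivalence) because $k$ is large enough that at most one dimension can collapse per round. The recursion depth is $O(\log_n(d))$ because sizes shrink by at most a $\sqrt{n/k}$ factor per round while remaining above $\Omega(d\lambda)$, and compounding this over $\log_n(d)$ steps produces the exponent $1/\log_n(d)$ both in the size and in the expansion.

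For Item 3, I iterate Item 2: extract $K_1$, pass to the residual graph $G \setminus K_1$ carrying a punctured code of parameters $(n',k',d')$ with $k' \geq k - |K_1|$ and distance still at least a constant fraction of $d$, reapply Item 2 to obtain $K_2$, and continue until the condition on $k$ fails. The hypothesis $k(n)d(n)^{2/D} \geq \beta n \log(n)^2/(1-\alpha)$ is calibrated precisely so that the residual code continues to satisfy the Item 2 hypothesis long enough to accumulate $\sum_i |K_i| > \alpha k$. The main obstacle I anticipate is the bookkeeping in this removal step: one must verify that each extraction of $K_j$ destroys essentially only one logical's worth of dimension in the residual code, since an aggressive deletion collapsing many logicals at once would cause $k'$ to plummet and prevent the aggregate sum from tracking $\alpha k$.
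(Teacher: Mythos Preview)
This proposition is not proved in the paper; it is quoted verbatim as Proposition~17 from the supplemental material of \cite{baspin2021quantifying} and serves only as motivation for the paper's own (cleaner) Corollary~\ref{cor:large-expanders}. So there is no in-paper proof to compare against. That said, your proposal diverges substantially from the method used both in \cite{baspin2021quantifying} and in this paper's analogous results, and it contains a genuine gap.

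The gap is in Item~1. You take $K$ to be the support of a minimum-weight logical $\ell$ and assert that a subset $T\subset K$ with small edge boundary in $G$ would let you ``multiply $\ell$ by a product of stabilizers acting across the cut'' to reduce its weight below $d$. But edges of the connectivity graph only record that two qubits share a check; a small $|\partial T|$ says merely that few checks straddle $T$ and $K\setminus T$. This does not hand you a stabilizer whose product with $\ell$ has smaller support. The cleaning lemma does let you push $\ell$ off a correctable region $T$, but the cleaned logical then lives on the \emph{vertex} neighbourhood of $T$ in the full graph (qubits outside $K$ that share a check with $T$), which is not controlled by the edge boundary inside $G[K]$ and need not have weight below $d$. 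So the claimed $\Omega(d/n)$ expansion of $G[K]$ does not follow from your argument, and since Items~2 and~3 are built on Item~1, the whole chain does not stand.

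The route actually taken (both in the cited source and in this paper) is the contrapositive of a BPT-style partitioning bound: if $G$ contains no $\epsilon$-expander subgraph of size $\geq m$, then $G$ can be cut into pieces of size $O(m)$ with at most $\widetilde O(\epsilon n)$ crossing edges (Theorem~\ref{thm:no-expander-partitioning} here; the separation-profile version in \cite{baspin2021quantifying}), and the standard $A\sqcup_i B_i$ bound then forces either $d\leq O(m)$ or $k\leq \widetilde O(\epsilon n)$ (resp.\ $\widetilde O(\epsilon^2 n)$ in the quantum case). Tuning $m$ and $\epsilon$ and taking the contrapositive yields the expander. Items~2--3 then come from iterating on subgraphs and tracking surviving logical dimensions, which is where the exponent $1/\log_n(d)$ and the factor $\beta n\log(n)^2$ arise. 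Your Items~2--3 have roughly the right iterative shape, but the base case needs to be rebuilt on the partitioning argument rather than on a direct support-of-logical analysis.
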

	
	That this proposition is a bit inelegant to state reflects the clunkiness that comes with using the separation profile. While the first item is relatively "clean", there is clearly room for improvement regarding the other two. In particular we might expect a tighter, more general statement to hold:

	\begin{proposition}
		
		For any $[[n,k,d]]$ quantum code, let $G$ be a connectivity graph for the code. There exists $\{K_i\}_i, K_i \subset G$, such that 
		
		\[
		\sum_i |K_i|\in \Omega(k), \  |K_i| \in \Omega(d)
		\]
		
		and $K_i$ is $\Omega\left( \sqrt{\frac{k}{n}}\right)$-expander. 
		
	\end{proposition}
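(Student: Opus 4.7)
The plan is to interleave a recursive separator-based decomposition of the connectivity graph $G$ with expander extraction, and to prove a quantitative BPT-type bound sharp enough to upgrade the expansion parameter from the $\sqrt{k/n}^{1/\log_n d}$ achieved in Item~3 of the previous proposition to the cleaner $\sqrt{k/n}$. Set $\alpha = c\sqrt{k/n}$ for a constant $c>0$ to be fixed at the end. Starting from $G$, I would apply recursively the following rule to each piece $H$: if $|H| < d$, set $H$ aside as a ``small'' piece; otherwise, if $H$ admits a balanced vertex separator of size at most $\alpha|H|$, split $H$ along the separator and recurse on the two halves while collecting the separator into a global set $\Sigma$; otherwise $H$ has no such small balanced separator, and a standard Cheeger-like argument then promotes this to a genuine $\Omega(\alpha)$-expansion of a large induced subgraph of $H$, which we declare to be one of the $K_i$.

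Given this decomposition, the task reduces to showing $\sum_i |K_i| \in \Omega(k)$. Invoking the separation-profile based BPT inequality of \cite{baspin2021connectivity, baspin2023improved} applied to the recursive decomposition tree, the code dimension supported on the residue $\Sigma$ together with the union $P$ of small pieces is bounded by $|\Sigma|$ plus the dimension of the restriction of the code to $P$. Since each connected component of $P$ has fewer than $d$ vertices, a cleaning-type argument combined with the distance hypothesis forces the restriction of the code to $P$ to have dimension $0$. Controlling $|\Sigma|$ against $k$ is therefore the crux: a naive level-by-level accounting only yields $|\Sigma| \in O(\alpha n \log n)$, which would force $\alpha \lesssim k/(n\log n)$ and merely reproduce the weak expansion of the earlier proposition.

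The main obstacle will be circumventing this logarithmic loss. I expect this requires a refined charging argument in which each logical qubit is associated with the specific recursive level at which its support is first cut by a separator, so that each logical qubit pays for only one separator rather than one at every level of the decomposition. Combined with the quadratic form of the quantum BPT inequality, this should upgrade the bound to $|\Sigma| \in O(\alpha n) = O(\sqrt{k n})$; for $c$ small enough this gives $|\Sigma| \le k/2$, whence $\sum_i |K_i| \ge \sum_i k(K_i) \ge k/2 \in \Omega(k)$, with each $K_i$ having size $\ge d/2$ and $\Omega(\sqrt{k/n})$-expansion by construction. The bulk of the technical work is therefore in the structural lemma underlying that refined charging, which I expect to state and prove as a graph-theoretic result (graphs whose large subgraphs all admit $\alpha$-balanced separators have separation profile $O(\alpha r)$ without any logarithmic overhead), of potentially independent interest.
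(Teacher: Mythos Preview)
The statement you are attempting to prove is \emph{not} proved in the paper: it is presented there as the expected strengthening, and the paper explicitly settles only for the weaker $\Omegalog(\sqrt{k/n})$ version (Corollary~\ref{cor:large-expanders}). Removing the logarithmic loss is listed verbatim as an open question (Section on Open Questions, item on Proposition~\ref{prop:expansion-concentration}). So there is no ``paper's own proof'' to compare against; rather, your proposal must be judged on whether it actually closes the gap the paper leaves open.

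It does not. Your outline reproduces, in slightly different language, the same recursive separator/expander-extraction scheme the paper uses (compare your decomposition with Theorem~\ref{thm:no-expander-partitioning} and its feeding into Theorem~\ref{thm:no-expander-code-bounds}), and you correctly isolate the logarithmic overhead in $|\Sigma| \in O(\alpha n \log n)$ as the sole obstruction. But the step on which everything hinges---the ``refined charging argument in which each logical qubit is associated with the specific recursive level at which its support is first cut''---is asserted, not supplied. Logical qubits in a stabilizer code do not have canonical supports; any choice of logical representatives can be cleaned and moved, so ``the level at which its support is first cut'' is not well-defined without further work, and there is no evident reason the total separator mass charged this way should be $O(\alpha n)$ rather than $O(\alpha n \log n)$. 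Your closing graph-theoretic conjecture (that graphs all of whose large subgraphs have $\alpha$-balanced separators have separation profile $O(\alpha r)$ with no log) is precisely the contrapositive of the strengthening of Proposition~\ref{prop:expansion-concentration} that the paper flags as open; you have relocated the difficulty, not removed it.

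In short: your high-level strategy matches the paper's, your diagnosis of the bottleneck is accurate, but the proposal contains no new idea for the one step that matters, and that step is genuinely open.
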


    Another hint that expansion might be a useful framework to derive bounds on codes can be found by making a few observations regarding  \cite{baspin2021quantifying}. Assume that $G$ is $t$-dense, there exists $H \subset G$ such that $H$ is $t/|H|$ expander. So there is a drawback to $H$ being small: the expansion goes up. If $H$ has expansion $t/|H|$, then it induces $|H|$ edges of length $t/\sqrt{|H|}$, and there is a tradeoff between $H$ being large, and inducing many edges versus $H$ being small and inducing long edges. Of course we can take the worse case scenario, i.e. we always have $n \geq |H| \geq t$, with $n$ the size of $G$. This yields least $t$ edges of size $t/\sqrt{n}$, but is clearly suboptimal. We hope that focusing on expansion might circumvent these drawbacks.

	In this note, we make a step by showing that the following proposition holds:
	
	\begin{proposition}
		
		For any $[[n,k,d]]$ quantum code, let $G$ be a connectivity graph for the code. There exists $\{K_i\}_i, K_i \subset G$, such that 
		
		\[
		\sum_i |K_i|\in \Omega(k), \  |K_i| \in \Omega(d)
		\]
		
		and $K_i$ is $\Omegalog\left( \sqrt{\frac{k}{n}}\right)$-expander. 
		
	\end{proposition}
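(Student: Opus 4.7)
My plan is to combine a density lower bound coming from the code parameters with a structural decomposition theorem for sparse graphs (the one announced in the abstract), applied contrapositively to produce expanders in one shot. Call $H \subseteq G$ a \emph{good expander} if $|H| \in \Omega(d)$ and $H$ is $\Omegalog(\sqrt{k/n})$-expanding; the goal is to extract a family of such subgraphs whose vertices cover $\Omega(k)$ of $G$.

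The first step is a code-to-density reduction. Using the counting arguments underlying the classical BPT bound and its graph-theoretic extensions in \cite{baspin2021connectivity,baspin2023improved,baspin2021quantifying}, any large enough induced subgraph of $G$ supporting a nontrivial fraction of the logical qubits must be $\tilde\Omega(k/n)$-dense on average. Robustness under vertex deletion is important here: removing any $o(k)$ vertices from $G$ leaves a residual code of rate still $\Omega(k/n)$ and distance still $\Omega(d)$, so the density lower bound survives peeling.

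The second step is the structural theorem, used contrapositively: if a graph contains no $\Omegalog(\sqrt{k/n})$-expander on $\Omega(d)$ vertices, then it admits a balanced hierarchical decomposition with small separators, which forces it to be too sparse to be $\tilde\Omega(k/n)$-dense. Combined with step one, this forces $G$ to contain a good expander $K_1$; extract it, delete, and repeat. The density bound is stable under deletion of $o(k)$ vertices, so the procedure continues until cumulatively $\Omega(k)$ vertices have been peeled, yielding $\sum_i |K_i| \in \Omega(k)$.

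The main obstacle is upgrading the $1/\log_n d$ exponent in Proposition 17 to a polylogarithmic slack. That older bound came from recursively descending into denser and denser cores, losing a constant multiplicative factor at each of $\Theta(\log_n d)$ descent levels, which is precisely how the exponent $1/\log_n d$ is born. To remove it, the structural theorem must guarantee the good expander at the target scale $\sqrt{k/n}$ \emph{in one shot}, rather than iteratively — equivalently, the separator-size bound in its contrapositive must depend only polylogarithmically (and not polynomially in $\log_n d$) on the nondensity assumption. Proving this uniformly across all regimes of $d$, in particular the small-$d$ regime where the old argument degenerates most severely, is where the core technical difficulty lies.
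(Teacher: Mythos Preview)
Your high-level plan matches the paper's: invoke the structural theorem (Theorem~\ref{thm:no-expander-partitioning}) in contrapositive form together with the BPT-style partition lemma, then peel as in \cite{baspin2021quantifying}. The gap is in how you justify the peeling.

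You write that ``removing any $o(k)$ vertices from $G$ leaves a residual code of rate still $\Omega(k/n)$ and distance still $\Omega(d)$'', and then iterate ``extract $K_1$, delete, repeat''. Neither clause is safe. Each extracted $K_i$ has size at least $d/3$ and possibly much more, so already after one deletion you may have removed $\geq d$ qubits; a punctured stabilizer code then carries no $\Omega(d)$ distance guarantee. And by the time the running total $\sum_i|K_i|$ reaches $k/2$ you have deleted $\Theta(k)$ vertices, not $o(k)$, so the robustness premise you stated does not cover the last (decisive) step of the iteration. More basically, $G\setminus K_1$ is simply not the connectivity graph of an $[[n',k',d']]$ code with $k',d'$ comparable to $k,d$, so re-applying the code-to-expander implication to that subgraph is not licensed.

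The peeling the paper inherits from \cite{baspin2021quantifying} avoids this by never treating the residual as a code. One fixes a \emph{maximal} vertex-disjoint family $\{K_i\}$ of good expanders in $G$ and argues by contradiction: if $\sum_i|K_i|<k/2$, maximality forces $G' \equiv G\setminus\bigcup_i K_i$ to contain no $\epsilon$-expander of size $\geq d/3$, so Theorem~\ref{thm:no-expander-partitioning} partitions $G'$ into pieces of size $<d$ with at most $c\log(n)\,\epsilon\,n$ crossing edges. One then feeds this partition of $G'$, together with $\bigcup_i K_i$ thrown into the set $A$, into the BPT partition lemma \emph{for the original code on the original graph $G$}; in the quantum case this is where the second layer of partitioning produces the $C_j$'s and the square $(c\log(n)\,\epsilon)^2$ appears. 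Choosing $\epsilon$ of order $\sqrt{k/n}/\log n$ makes $|A| < k/2 + \tilde O(\epsilon^2 n) < k$, contradicting $k\le |A|$. Nothing in this argument asserts that the residual graph encodes anything; the code parameters $k,d$ are only ever invoked on the full $G$.
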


	nd a classical equivalent is also shown in Corollary \ref{cor:large-expanders}. This result is certainly still far from optimal since we still do not make use of the expansion lemma, we thus expect it to be less tight than \cite{baspin2023improved}. Unlike the quantum case, the bounds we obtain on classical codes allow us to retrieve the BPT bound up to polylog factors. 
    
    Finally, we also present converses to those results: we show how the BPT bound for classical codes is tight in Section \ref{sec:converse-bpt}, and we show a partial converse to the more general bounds from expansion in Section \ref{sec:converse-exp}. There is history of work attempting to saturate this kind of bound, which we summarize below.
	
	\begin{enumerate}
		\item In \cite{bravyi2010tradeoffs}, an attempt is made to saturate the classical BPT bound by numerically generating codes from cellular automata. For example consider a one dimensional line of bits of length $l$, and apply a cellular automaton during $\Delta$ steps. Then the codewords are defined on $l*(\Delta + 1)$ bits, and correspond to the $(\Delta + 1)$ states of the line during the computation plus the initial state. This construction however does not saturate the bound, and does provide an analytical lower bound on the distance of the code. 
		
		\item In \cite{yoshida2013information} a construction similar to that of \cite{bravyi2010tradeoffs} is considered, but focuses on fractal-generating automata. For codes on bits, the author describes a $2D$ local family with parameters $\param{n}{n^{1/2}}{n^{\sim0.8}}$, and a $3D$ local family with parameters $\param{n}{n^{2/3}}{n^{2/3}}$. However, one can formulate a version of the classical BPT bound for higher dimensional spins, and this construction asymptotically saturates it when the spin dimension diverges.
		
		\item In \cite{bacon2015sparse} the authors show that by considering a syndrome extracting circuit, rather than a cellular automaton, a construction similar to that of \cite{bravyi2010tradeoffs} yields subsystem codes with provable lower bounds. In particular, they show that starting with an LDPC stabilizer code with parameters $[n_0,k_0,d_0]$, then one obtains a $D$-local subsystem code on $n \in \Omega(n_0)$ qubits with parameters $\param{n}{k_0/n_0\cdot n^{(D-1)/D}}{d_0/n_0\cdot n^{(D-1)/D}}$. This construction is optimal for subsystem codes since, when used with good LDPC stabilizer codes \cite{panteleev2021good,leverrier2022quantum}, they yield codes with parameters $\param{n}{n^{1-1/D}}{n^{1-1/D}}$, which is best possible for all ranges of $k,d$.
	\end{enumerate}

\subsection{Open Questions}

The lack of tightness of our bounds can probably in large part be attributed to very ad-hoc graph-theoretical constructions. It makes sense to raise the question of their optimality:

\begin{enumerate}
	\item Expansion concentration lemma \ref{prop:expansion-concentration}: can we improve the expansion of $G[U]$ from $\Omegalog\left(\epsilon\right)$ to $\Omega(\epsilon)$?
	\item Partitioning lemma \ref{lem:partitioning}: Can we show that the regions $A'_j$ also have at most $O(\epsilon m)$ boundary edges? That would allow us to make use of the expansion lemma and recover the results of \cite{baspin2023improved}.
\end{enumerate}

\section{Graph partitioning results}

The goal of that section is to prove the following intuitive section: if a graph $G$ does not contain any "dense" subgraph $G' \subset G$, then $G$ can easily be partitioned. The formal result can be found in Theorem \ref{thm:no-expander-partitioning}. We start by introducing the graph theoretical tools we will need. 

\begin{definition}
		Let $G = (V,E)$ be a graph. A separator is a subset $S \subset E$ of edges such that its removal leaves two disconnected subgraphs $G_1 = (V_1, E_1), G_2 = (V_2, E_2)$ such that $|V_1|, |V_2| \leq \frac{2}{3}n$. We can now define $\sep(G)$: 
		
		\[
		\sep(G) = \min_{S \text{ is a separator of } G'} |S|
		\]
	\end{definition}

	\begin{definition}
		Let $G = (V,E)$ a finite graph. For any $U \subset V$, we define 
		
		\[
		\phi_G(U) = \frac{\card{\bdry U}}{\card{U}}
		\]
		
		where $\bdry U \subset E$ is the set of edges with exactly one endpoint in $U$. The small scale expansion constant $h_\alpha(G)$ is defined as
		
		\[
		\min_{U \subset V, \card{U} \leq \alpha \card{V}} \phi_G(U)
		\]
		
		And we say that $G$ is $\epsilon$-expander if 
		
		\[
		h_{1/2}(G) \geq \epsilon
		\]

	\end{definition}

	At times it will be useful to consider a subset of $\bdry U$. For some $W \subset V$, we write $\bdry^W U$ the set of edges that have one endpoint in $U$, and the other in $W$. We write $G[U] = (U, E_U)$ is the subgraph induced by $U$ where $E_U \subset E$ denotes the edges in $E$ that have both endpoints in $U$.

    We will also make use of the following Lemma and Corollary which relate separation and expansion.

\begin{lemma}[Inspired by \cite{bottcher2010bandwidth}, Lemma 12]
		Let $G$ be a graph on $n$ vertices. If there is no $U \subset G$ with $\card{U} \geq 2n/3$ such that $G[U]$ is $\epsilon$-expander then $\sep(G) \leq \frac{2}{3}\epsilon n$.
	\end{lemma}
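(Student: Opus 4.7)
The plan is to iteratively peel off low-expansion subsets from $G$ until the accumulated pile has size between $n/3$ and $2n/3$; then its boundary in $G$ will serve as the separator. I would argue as follows.

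Set $A_0 = \emptyset$ and proceed in rounds. At the start of round $i$, assume $\card{A_{i-1}} < n/3$, so the complement $V \setminus A_{i-1}$ has more than $2n/3$ vertices. By hypothesis, $G[V \setminus A_{i-1}]$ is not $\epsilon$-expander, which by definition of $h_{1/2}$ yields a set $U_i \subset V \setminus A_{i-1}$ with $1 \leq \card{U_i} \leq \card{V \setminus A_{i-1}}/2$ and $\card{\bdry^{V \setminus A_i} U_i} < \epsilon \card{U_i}$, where this boundary is taken inside $G[V \setminus A_{i-1}]$. Let $A_i = A_{i-1} \cup U_i$, and stop at the first index $i^*$ with $\card{A_{i^*}} \geq n/3$.

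The size bound follows because $\card{U_{i^*}} \leq (n - \card{A_{i^*-1}})/2$, so
\[
\card{A_{i^*}} \;\leq\; \card{A_{i^*-1}} + \frac{n - \card{A_{i^*-1}}}{2} \;=\; \frac{n + \card{A_{i^*-1}}}{2} \;<\; \frac{n + n/3}{2} \;=\; \frac{2n}{3},
\]
so both $A_{i^*}$ and $V \setminus A_{i^*}$ fit in the $2n/3$ bound required of a separator. For the edge bound I would observe that every edge $e \in \bdry_G A_{i^*}$ has its endpoint in $A_{i^*}$ lying in a unique $U_j$ (with $j \leq i^*$), while its other endpoint lies in $V \setminus A_{i^*} \subset V \setminus A_j$; hence $e$ is counted in $\bdry^{V \setminus A_j} U_j$. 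Summing the per-round bounds,
\[
\card{\bdry_G A_{i^*}} \;\leq\; \sum_{j=1}^{i^*} \card{\bdry^{V \setminus A_j} U_j} \;<\; \epsilon \sum_{j=1}^{i^*} \card{U_j} \;=\; \epsilon \card{A_{i^*}} \;\leq\; \tfrac{2}{3}\epsilon n,
\]
which provides the required separator.

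The only delicate point is the bookkeeping in the last step, namely checking that each edge of $\bdry_G A_{i^*}$ contributes to exactly one $\bdry^{V \setminus A_j} U_j$; the size control and the existence of sparse cuts at each step follow directly from the hypothesis and the definition of $h_{1/2}$, so I do not expect serious obstacles beyond making sure the induction on $i$ is set up so that the hypothesis $\card{V \setminus A_{i-1}} > 2n/3$ is available every time a new $U_i$ is extracted.
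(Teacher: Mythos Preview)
Your argument is correct and is essentially identical to the paper's proof: both iteratively peel off $\epsilon$-F{\o}lner sets from the current remainder until the accumulated union lands in the window $[n/3,\,2n/3]$, then bound the separator size by $\epsilon$ times the size of that union. The paper phrases the recursion in terms of the shrinking complement $B_i$ and stops when $\card{B_l}\le 2n/3$, whereas you track the growing pile $A_i$ and stop when $\card{A_{i^*}}\ge n/3$; these are the same stopping rule, and your size computation $\card{A_{i^*}} < (n+\card{A_{i^*-1}})/2 < 2n/3$ matches the paper's $\card{B_l}\ge \card{B_{l-1}}/2 \ge n/3$.
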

	\begin{proof}
		We recursively define three sequences $(A_i)_i, (S_i)_i, (B_i)_i$.
		Let $A_0 = S_0 = \emptyset$, $B_0 = V$.
		As long as $|B_i| \geq \frac{2}{3}n$ then $G[B_i]$ is not an $ \epsilon$-expander, and we can find a set $A_{i+1} \subset B_i, \card{A_{i+1}} \leq \card{B_i}/2$ such that $\phi_{G[B_i]}(A_{i+1}) \leq \epsilon$. We let $S_{i+1} \equiv \bdry^{B_i} A_{i+1}$. 
		Write $l$ the first index for which $\card{B_l} \leq \frac{2}{3}n$. We then have, by definition $\card{B_{l}} = \card{B_{l-1}} - \card{A_{l}} \geq \card{B_{l-1}} - \frac{1}{2}\card{B_{l-1}}  = \frac{1}{2}\card{B_{l-1}} \geq \frac{1}{3}n$.
		
		We obtain $A' \equiv \bigcup_{i=1}^{i=l} A_i$, $B' \equiv B_l$, $S' \equiv \bigcup_{i=1}^{i=l} S_i$, where $A' \sqcup B'$ is a partition of $V$, and removing the edges $S'$ disconnects the vertices of $A'$ from the vertices of $B'$. Since $\frac{1}{3}n \leq B' \leq \frac{2}{3}n$, then $\frac{1}{3}n \leq A' \leq \frac{2}{3}n$, we conclude that $S'$ is a separator for $G$. Further we can verify that 
		\[
		\card{S'} = \sum_i \card{\bdry^{B_i} A_{i+1}} = \sum \phi_{G[B_i]}(A_{i+1}) \card{A_{i+1}} \leq \epsilon \sum_i \card{A_{i+1}} = \epsilon \card{A'} \leq \frac{2}{3} \epsilon n \ .
		\]
		
		This gives $\sep(G) \leq \frac{2}{3} \epsilon n$.
	\end{proof}
	
	\begin{corollary}
		\label{cor:high-sep-high-exp}
		Let $G$ be a graph on $n$ vertices. If $\sep(G) \geq \epsilon n$, then there exists $U \subset G$ with $\card{U} \geq 2n/3$ such that $G[U]$ is $\frac{3}{2}\epsilon$-expander.
	\end{corollary}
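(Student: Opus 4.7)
The plan is to derive the corollary as the contrapositive of the preceding lemma, with the factor $3/2$ exactly canceling the $2/3$ loss in the lemma's bound on $\sep(G)$. Concretely, I would argue by contradiction: suppose that every $U \subset V$ with $\card{U} \geq 2n/3$ fails to make $G[U]$ a $\tfrac{3}{2}\epsilon$-expander. Then the hypothesis of the preceding lemma is satisfied with parameter $\epsilon' \equiv \tfrac{3}{2}\epsilon$, and its conclusion yields
\[
\sep(G) \;\leq\; \tfrac{2}{3}\epsilon' n \;=\; \epsilon n,
\]
in tension with the assumption $\sep(G) \geq \epsilon n$. This already gives a contradiction whenever the corollary's inequality is strict.

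The sole subtlety is the boundary case $\sep(G) = \epsilon n$. To handle it, I would apply the preceding lemma at parameter $\tfrac{3}{2}\epsilon - \delta$ for each sufficiently small $\delta > 0$: by the contrapositive, this produces some $U_\delta$ with $\card{U_\delta} \geq 2n/3$ such that $G[U_\delta]$ is a $(\tfrac{3}{2}\epsilon - \delta)$-expander. Since $V$ has only finitely many subsets, some fixed $U^\star$ must be returned along a sequence $\delta_k \downarrow 0$, and that $U^\star$ is then a $\tfrac{3}{2}\epsilon$-expander as required. There is no real obstacle here; the corollary is essentially a clean repackaging of the preceding lemma, and the only thing worth a moment's thought is the strict-versus-non-strict boundary, which is immediate by the finiteness of $V$.
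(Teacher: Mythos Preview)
Your approach is exactly the paper's: the corollary is stated as the contrapositive of the preceding lemma, with $\epsilon' = \tfrac{3}{2}\epsilon$ chosen so that the $\tfrac{2}{3}$ factor cancels. Your treatment of the boundary case $\sep(G) = \epsilon n$ via finiteness of $V$ is a nice touch that the paper omits.
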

	\begin{proof}
		Contrapositive of the previous lemma.
	\end{proof}

 \subsection{Expansion concentration}

Let $G= (V,E)$ be a graph on $n$ vertices. A subset $U \subset V$ is said to be $\alpha$-small if $\card{U} \leq \alpha \card{V}$, $\alpha$-big if $\card{U} \geq \alpha \card{V}$, and $\epsilon$-Folner if $\frac{\card{\bdry U}}{\card{U}}\leq \epsilon$.

In this section we show that if a graph $G$ contains no $\alpha$-small $\epsilon$-Folner set, then there is a subgraph $G' = (V',E') \subset G$, where $V'$ is $\frac{2}{3}\alpha$-big, such that $G'$ contains no $\frac{1}{2}$-small $\Omega(\frac{\epsilon}{\log(n)})$-Folner set. In some sense we start a graph $G$ with some well defined but "dilute" expansion that is guaranteed only for small subsets, and we concentrate it into a subgraph $G'$ for which the expansion is almost the same, and valid for all $\frac{1}{2}$ subsets, which constitutes the usual notion of expansion.

	\begin{proposition}[Expansion concentration]
        \label{prop:exp-concentration}
		Let $G = (V,E)$ be a graph on $n$ vertices satisfying 
		
		\[
		h_{\alpha}(G) \geq  \epsilon
		\]
		
		then there exists $U \subset V, \card{U} \geq \frac{2}{3} \alpha \card{V}$, such that 
		
		\[
		h_{1/2}(G[U]) \geq \frac{\epsilon}{4 \log_{3/2}(1/\alpha)} \in \Omega\left(\frac{\epsilon}{\log(n)}\right)
		\]
	\end{proposition}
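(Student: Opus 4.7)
The plan is to argue by contrapositive using a hierarchical bisection of $G$ built from the separator lemma established just above. Set $\epsilon' := \epsilon/(4\log_{3/2}(1/\alpha))$ and $L := \log_{3/2}(1/\alpha)$, and suppose for contradiction that no $U \subset V$ with $|U| \geq \tfrac{2}{3}\alpha|V|$ satisfies $h_{1/2}(G[U]) \geq \epsilon'$. I will show this is incompatible with $h_\alpha(G) \geq \epsilon$ by aggregating per-level cut budgets over a recursion tree.

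The construction: the root of the tree is $V$. At any node $P$ with $|P| \geq \alpha n$, the contrapositive hypothesis forbids any $U' \subset P$ with $|U'| \geq \tfrac{2}{3}|P|$ (so in particular $|U'| \geq \tfrac{2}{3}\alpha n$) from being an $\epsilon'$-expander. Invoking the previous separator lemma then gives $\sep(G[P]) \leq \tfrac{2}{3}\epsilon'|P|$, so $P$ admits a bisection $P = P_1 \sqcup P_2$ with $|P_i| \leq \tfrac{2}{3}|P|$ and cut of size at most $\tfrac{2}{3}\epsilon'|P|$. I recurse on each $P_i$ and declare a node a leaf once its size drops below $\alpha n$. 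Because each bisection shrinks the piece size by a factor at least $2/3$, the resulting tree has depth at most $L$.

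Next comes the key accounting. Consider $\Sigma := \sum_{P \text{ leaf}} |\partial_G P|$ and bound it two ways. On one hand, every leaf has $|P| < \alpha n$, so $h_\alpha(G) \geq \epsilon$ gives $|\partial_G P| \geq \epsilon|P|$; summing over the partition of $V$ into leaves yields $\Sigma \geq \epsilon n$. On the other hand, each edge between two distinct leaves is cut at exactly one bisection in the tree and contributes to the boundary of precisely the two leaves containing its endpoints; the internal nodes at any fixed depth are disjoint subsets of $V$, so the total cut created at that depth is at most $\tfrac{2}{3}\epsilon' n$, and summing over the at most $L$ depths then multiplying by two to account for the double counting gives $\Sigma \leq \tfrac{4}{3}\epsilon' L n$. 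Combining yields $\epsilon \leq \tfrac{4}{3}\epsilon' L$, i.e., $\epsilon' \geq 3\epsilon/(4L)$, which contradicts the choice $\epsilon' = \epsilon/(4L)$.

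The main obstacle is the upper bound on $\Sigma$: verifying that each cut edge is counted by exactly two leaf boundaries, and that the per-depth cut budget $\tfrac{2}{3}\epsilon' n$ is valid (it comes from the fact that the internal nodes at one depth are disjoint subsets of $V$, so $\sum_{P \text{ internal at depth } l} |P| \leq n$). Once this bookkeeping is in place, the proof collapses to the inequality above. Every other ingredient—the separator lemma, the shrinkage ratio $2/3$, the depth bound $L$, and the application of $h_\alpha$ at leaves—is already set up in the excerpt or is essentially routine.
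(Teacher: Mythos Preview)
Your argument is correct and takes a genuinely different route from the paper. The paper does not build a bisection tree; it runs a single descending chain $Y_1\supset Y_2\supset\cdots$ instead. After first disposing of the case $h_{1/2}(G)\ge\epsilon/2$, it picks a F{\o}lner set $Y_1$ with $\phi_G(Y_1)\le\epsilon/2$ and $\card{Y_1}\ge\alpha n$, takes a (minimum) separator $S_i$ of $G[Y_i]$ at each step, keeps whichever side $A$ of the split has the smaller external-boundary ratio, and telescopes the inequality $\phi_G(Y_{i+1})\le\phi_G(Y_i)+\card{S_i}/\card{Y_{i+1}}$. When the chain first drops below size $\alpha n$ the hypothesis $h_\alpha(G)\ge\epsilon$ forces $\sum_i\card{S_i}/\card{Y_{i+1}}\ge\epsilon/2$, and pigeonhole over the at most $L$ steps produces one index $i'$ with $\sep(G[Y_{i'}])\ge\Omega(\epsilon/L)\,\card{Y_{i'}}$; Corollary~\ref{cor:high-sep-high-exp} then extracts the expander inside $Y_{i'}$. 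Your tree-based double counting is more symmetric, avoids the initial case split and the mediant inequality used to choose a side, and is essentially the same mechanism the paper deploys only later (Corollary~\ref{cor:expansion-concentration} together with Lemma~\ref{lem:partitioning}) to turn ``no large expander'' into a cheap partition---so your proof effectively merges two of the paper's steps into one contrapositive. The paper's route is in exchange mildly more constructive: it actually points to a concrete set $Y_{i'}$ containing the expander rather than concluding purely by contradiction. One small bookkeeping point: internal nodes can occur at depths $0,\ldots,\lfloor L\rfloor$, so there are up to $\lfloor L\rfloor+1$ bisection levels rather than $L$; this only perturbs the leading constant (the contradiction still fires whenever $\lfloor L\rfloor+1<3L$), and the paper's own proof carries the analogous ceiling slack.
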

	\begin{proof}
		If $h_{1/2}(G) \geq \epsilon/2$ the lemma follows, so we will assume $h_{1/2}(G) \leq \epsilon/2$.  This assumption guarantees the existence of $U \subset V, \alpha \card{V} \leq \card{U} \leq \card{V}/2$ such that $\phi_G(U) \leq \epsilon/2$. 	We will recursively define a sequence $(Y_i)_i, Y_i \subset V$, and we take $Y_1 \equiv U$.
		
		By definition, for any partition $Y_i = A \sqcup B$, we have
		
		\begin{align*}
			\phi_G(Y_i) = \frac{\card{\bdry Y_i}}{\card{Y_i}}
			= \frac{\card{\bdry^{\comp{Y_i}}A} + \card{\bdry^{\comp{Y_i}}B}}{|A| + |B|}
		\end{align*}
	
		where $\comp{Y_i} \equiv V \setminus Y_i$.
		
	
		WLOG, we take $\frac{\card{\bdry^{\comp{Y_i}}A}}{\card{A}} \leq \frac{\card{\bdry^{\comp{Y_i}}B}}{\card{B}}$. Using a standard algebraic inequality, this gives
		
		\[
		\frac{\card{\bdry^{\comp{Y_i}}A}}{\card{A}}  \leq \phi_G(Y_i) \leq \frac{\card{\bdry^{\comp{Y_i}}B}}{\card{B}}
		\]
		
		Using the fact that $\bdry A  = \bdry^{\comp{Y_i}}A \cup \bdry^B A $, we can obtain:
		
		\[
		\frac{\card{\bdry A } - \card{\bdry^B A }}{\card{A }} = \frac{\card{\bdry^{\comp{Y_i}}A }}{\card{A }} \leq \phi_G(Y_i)
		\]
		
		It remains to specify our choice of $A,B$. Let $G[Y_i]$ have an edge separator $S_i$, then $Y_i = A \sqcup B$ with $\frac{1}{3}\card{Y_i} \leq |A| \leq \frac{2}{3}\card{Y_i}$. We choose $Y_{i+1} = A$, applying the previous inequality, this yields
		
		\[
		\phi_G(A ) \leq \phi_G(Y_i) + \frac{\card{S_i}}{\card{Y_{i+1} }} \ .
		\]
		
		We write $\delta_i \equiv \frac{\card{S_i}}{\card{Y_{i+1} }}$. Substituting $Y_{i+1} $ for $Y_i$, we repeat the argument recursively until the first $l$ such that $\card{Y_{l+1}} \leq \alpha \card{V}$. At that point, we have
		
		\[
		\epsilon \leq \phi_G(Y_{l+1}) \leq \phi_G(Y_l) + \delta_l = \phi_G(Y_1) + \sum_{i=1}^{i=l} \delta_i
		\]
		
		or
		
		\[
		\sum_i \delta_i \geq \epsilon/2
		\]
		
		Due to the upper bound on the size of $\card{Y_{i+1}} \leq \frac{2}{3}\card{Y_i}$, $l$ is at most $\lceil \log_{3/2}(n/\alpha n) \rceil$. Write $i' = \arg\max_i \delta_i$, then
		
		\[
		\lceil \log_{3/2}(\alpha^{-1}) \rceil \delta_{i'} \geq \epsilon/2
		\]
		
		or
		
		\[
		\delta_{i'} \geq \frac{\epsilon}{2 \lceil \log_{3/2}(\alpha^{-1}) \rceil}
		\]
		
		By the definition of a separator, we know that $\card{Y_{i'+1}} \geq \frac{1}{3}\card{Y_{i'}}$, which yields
		
		\[
		\frac{\card{S_{i'}}}{\frac{1}{3}\card{Y_{i'}}} \geq \frac{\card{S_{i'}}}{\card{Y_{i'+1}}} =  \delta_{i'} \geq \frac{\epsilon}{2 \lceil \log_{3/2}(\alpha^{-1}) \rceil}
		\]
		
		or 
		
		\[
		\card{S_{i'}} \geq \frac{\epsilon}{6 \lceil \log_{3/2}(\alpha^{-1}) \rceil} \card{Y_{i'}}
		\]
		
		By applying Corollary \ref{cor:high-sep-high-exp}, we can guarantee that there exists $U \subset Y_{i'}$ such that $\card{U} \geq \frac{2}{3}\card{Y_{i'}} \geq \frac{2}{3} \alpha \card{V}$, and $G[U]$ is $\frac{\epsilon}{4 \lceil \log_{3/2}(\alpha^{-1}) \rceil}$-expander.
	\end{proof}

\subsection{Partitioning non-expanding graphs at low cost}

    In this section, we prove our main graph-theoretical result: a graph with no "dense" subgraphs can be partitioned at low cost. For the sake of convenience, we will use a slightly different definition for the small scale expansion constant.

\begin{definition}
		Let $G = (V,E)$ be a finite graph. The small scale expansion constant $\hat{h}_m(G)$ is defined as
		
		\[
		\hat{h}_m(G) = \min_{U \subset V, \card{U} \leq m} \phi_G(U)
		\]
		
	\end{definition}
 
    Proposition \ref{prop:exp-concentration} can thenbe rephrased in the following terms.

	\begin{proposition}[Expansion concentration]
		\label{prop:expansion-concentration}
		Let $G = (V,E)$ be a graph on $n$ vertices satisfying 
		
		\[
		\hat{h}_{m}(G) \geq  \epsilon
		\]
		
		then there exists $U \subset V, \card{U} \geq \frac{2}{3} m$, and a universal constant $c$ such that $G[U]$ is $\frac{\epsilon}{c \cdot \log(n)}$-expander.
	\end{proposition}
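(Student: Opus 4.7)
The plan is to observe that Proposition~\ref{prop:expansion-concentration} is merely a rephrasing of Proposition~\ref{prop:exp-concentration} under the correspondence $m = \alpha n$, and to verify that the logarithmic factor indeed bounds as $\log_{3/2}(1/\alpha) \in O(\log n)$.

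First I would unpack the definitions: by construction $\hat{h}_m(G) = h_{m/n}(G)$, since both minimize $\phi_G(U)$ over subsets of size at most $m = (m/n)\cdot n$. So the hypothesis $\hat{h}_m(G) \geq \epsilon$ is exactly $h_{m/n}(G) \geq \epsilon$, and we may apply Proposition~\ref{prop:exp-concentration} with $\alpha \equiv m/n$. This directly yields a subset $U \subset V$ with $\card{U} \geq \tfrac{2}{3}\alpha n = \tfrac{2}{3}m$ such that
\[
h_{1/2}(G[U]) \geq \frac{\epsilon}{4 \lceil \log_{3/2}(1/\alpha) \rceil} = \frac{\epsilon}{4 \lceil \log_{3/2}(n/m) \rceil}.
\]

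Next, I would eliminate the dependence on $m$ in the denominator. Since $m \geq 1$ (otherwise there is nothing to prove, as no subset satisfies the hypothesis nontrivially), we have $\log_{3/2}(n/m) \leq \log_{3/2}(n) = \log(n)/\log(3/2)$. Absorbing the factor $4/\log(3/2)$ and the ceiling into a single universal constant $c$, we conclude $G[U]$ is $\frac{\epsilon}{c \log(n)}$-expander as claimed.

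There is essentially no obstacle here: the content of the proof lives entirely in Proposition~\ref{prop:exp-concentration}, and this restatement only requires the trivial bookkeeping that $\alpha \geq 1/n$ forces $\log_{3/2}(1/\alpha) \leq \log_{3/2}(n)$. The only care needed is to note that the reformulation in terms of $m$ rather than a fraction $\alpha$ is the form that will be most convenient in the downstream partitioning argument (Lemma~\ref{lem:partitioning}), where $m$ appears as an absolute threshold rather than a ratio relative to $\card{V}$.
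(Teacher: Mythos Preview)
Your proposal is correct and matches the paper's own treatment exactly: the paper explicitly introduces Proposition~\ref{prop:expansion-concentration} with the sentence ``Proposition~\ref{prop:exp-concentration} can then be rephrased in the following terms,'' and your write-up supplies precisely the routine bookkeeping (identifying $\alpha = m/n$ and bounding $\log_{3/2}(1/\alpha) \leq \log_{3/2}(n)$) needed to make that rephrasing rigorous.
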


	\begin{corollary}
		\label{cor:expansion-concentration}
		Let $G = (V,E)$ be a graph on $n$ vertices for which there is no $U \subset V$, $\card{U} \geq m$ such that $G[U]$ is expander, then 
		
		\[
		\forall G' \subset G, \ \hat{h}_{\frac{3}{2}m}(G') \leq  c \cdot \epsilon \cdot \log(n)
		\]
	\end{corollary}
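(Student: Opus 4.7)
The plan is to obtain this corollary as essentially the contrapositive of Proposition \ref{prop:expansion-concentration}, combined with the observation that the hypothesis is inherited by every subgraph. First, reading that proposition in contrapositive form and substituting $m \mapsto 3m/2$ and $\epsilon \mapsto c\,\epsilon \log n$, it becomes: whenever $G$ on $n$ vertices admits no $U \subset V$ with $\card{U} \geq m$ for which $G[U]$ is $\epsilon$-expander, one has $\hat{h}_{3m/2}(G) < c\,\epsilon\log n$. This already settles the $G' = G$ case of the corollary, and amounts to pure rebookkeeping of the proposition's parameters.

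Next, I would argue that the hypothesis on $G$ automatically transfers to every subgraph $G' \subset G$. Fix such a $G'$ and suppose for contradiction that some $U \subset V(G')$ with $\card{U} \geq m$ makes $G'[U]$ an $\epsilon$-expander. The edges of $G'[U]$ are contained in those of $G[U]$, so for every $S \subset U$ the edge boundary of $S$ in $G'[U]$ is contained in the edge boundary of $S$ in $G[U]$. In particular $\phi_{G[U]}(S) \geq \phi_{G'[U]}(S) \geq \epsilon$ whenever $\card{S} \leq \card{U}/2$, showing that $G[U]$ itself is an $\epsilon$-expander and contradicting the hypothesis on $G$.

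Finally, I would apply the contrapositive form above to each $G' \subset G$, viewed as a graph on $n' \leq n$ vertices, yielding $\hat{h}_{3m/2}(G') < c\,\epsilon \log n' \leq c\,\epsilon \log n$, which is exactly the advertised bound. The degenerate situation where $\card{V(G')} < 3m/2$ is trivial: one can simply take $U = V(G')$, giving $\phi_{G'}(U) = 0$, so any nonnegative upper bound holds vacuously. I do not foresee a genuine obstacle in this argument; the main subtlety is just making sure the parameter substitution when inverting Proposition \ref{prop:expansion-concentration} lines up correctly with the corollary's statement, and that the monotonicity of expansion under edge removal is stated cleanly enough to transfer the hypothesis to arbitrary subgraphs $G' \subset G$.
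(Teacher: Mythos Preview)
Your argument is correct and matches the paper's intended route: the corollary is simply the contrapositive of Proposition~\ref{prop:expansion-concentration} (with the substitution $m \mapsto \tfrac{3}{2}m$, $\epsilon \mapsto c\,\epsilon\log n$), combined with the monotonicity observation that if $G'[U]$ is an $\epsilon$-expander then so is $G[U]$, which lets the hypothesis descend to every subgraph. The paper states the corollary without proof, so your write-up is in fact more explicit than the original, and your handling of the degenerate case $\card{V(G')} < \tfrac{3}{2}m$ is a nice touch.
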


	\begin{lemma}
		\label{lem:partitioning}
		Let $G$ such that for any $G'$ with $\card{G'} \geq m$ we have $\hat{h}_{m}(G') \leq \epsilon$, then can partition into regions $\{A'_j\}_{j=1}^{j=t}$, with $A'_j < 2m$, $t \leq n/m + 1$, and there are at most $\epsilon n $ edges with endpoints in different $A'_j$'s.
	\end{lemma}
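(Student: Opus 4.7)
The plan is to peel off low-conductance sets iteratively and then coarsen them into pieces of the right size. Starting from $G_0 = G$, while $|V(G_i)| \geq m$ I apply the hypothesis to $G_i$ to obtain a set $U_i \subset V(G_i)$ with $|U_i| \leq m$ and $|\partial_{G_i} U_i| \leq \epsilon |U_i|$, then set $G_{i+1} = G_i[V(G_i) \setminus U_i]$. The process terminates with some leftover set $R$ of size strictly less than $m$, producing an ordered partition $V = U_1 \sqcup \cdots \sqcup U_T \sqcup R$.

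To control the number of pieces I would then greedily coarsen: scan the $U_i$ in order, appending each to an active region $A'_j$, and close $A'_j$ (opening a new active region) as soon as its size reaches $m$; the leftover $R$ is appended to the currently active region, or kept as its own if none is open. Because every $U_i$ has size at most $m$ and we only continue appending while the active region has size $< m$, every resulting $A'_j$ satisfies $|A'_j| < 2m$. Since every closed region has size at least $m$, if there are $t$ regions in total then $(t-1)m \leq n$, giving $t \leq n/m + 1$.

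For the cut-edge bound the key observation is that any edge crossing two distinct regions $A'_j \neq A'_{j'}$ must connect some $U_i$ to some $U_{i'}$ with $i \neq i'$; taking $i$ to be the smaller index, such an edge is present in $G_i$ (since $U_{i'} \subset V(G_i) \setminus U_i$) and is therefore counted in $|\partial_{G_i} U_i|$. Summing over $i$ gives at most $\sum_i \epsilon |U_i| \leq \epsilon n$ cross-region edges in total.

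The main subtlety here is simply an accounting mismatch: the Folner guarantee at step $i$ is measured inside the progressively shrunken graph $G_i$, yet we want a bound on cuts in the global partition of $G$. The peeling order resolves this because every cross-edge between $U_i$ and $U_{i'}$ with $i < i'$ is charged exactly once, namely to step $i$, with no double-counting and without ever using information about edges that were already cut in earlier rounds. Nothing else in the argument looks genuinely hard; the hypothesis supplies exactly what is needed to continue the peeling as long as the remnant has at least $m$ vertices.
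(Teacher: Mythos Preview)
Your proposal is correct and follows essentially the same approach as the paper: iteratively peel off Folner sets of size at most $m$ from the residual graph, then coarsen consecutive atoms into blocks of size in $[m,2m)$ (except possibly the last), bound $t$ via $(t-1)m \le n$, and charge every cross-edge to the boundary of the earlier-removed atom. The only cosmetic slip is that a cross-edge might join some $U_i$ to the leftover $R$ rather than to another $U_{i'}$, but your charging argument covers that case verbatim since $R \subset V(G_i)\setminus U_i$ as well.
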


	\begin{proof}
		We recursively define three sequences $(A_i)_i, (S_i)_i, (B_i)_i$.
		Let $A_0 = S_0 = \emptyset$, $B_0 = V$.
		As long as $|B_i| \geq m$ then $\hat{h}_{m}(G[B_i]) \leq \epsilon$, and we can find a set $A_{i+1} \subset B_i, \card{A_{i+1}} \leq m$ such that $\phi_{G[B_i]}(A_{i+1}) \leq \epsilon$. We let $S_{i+1} \equiv \bdry^{B_i} A_{i+1}$. 
		Write $l$ the first index for which $\card{B_l} \leq m$, and we pick $A_{l+1} \equiv B_l$.
		
		Now note that $V = \bigsqcup_{i=1}^{i=l+1} A_i $, and $\card{A_i} \leq m$. We can reorganize those $A_i$'s to form bigger (but no too big) sets $A'_j$. Indeed define 
		
		\[
		A'_1 = \bigcup_{i=1}^{i=l'_1} A_i \ ,
		\]
		
		where $l'_1$ is the smallest integer such that 
		
		\[
		\card{\bigcup_{i=1}^{i=l'_1+1} A_i} \geq 2m \ ,
		\]
		
		then one can verify that $m \leq A'_1 < 2m$. We can repeat this process a number of times $t$ until we exhaust all $A_i$'s. We thus obtain $\{A'_j\}_{j=1}^{j=t}$, where $V = \bigsqcup_{j=1}^{j=t} A'_j$. For $1 \leq j \leq t-1$ we have $m \leq \card{A'_j }< 2m$, and $\card{A'_t } \leq m$: we cannot lower bound the size of the last set. We now obtain an upper bound on $t$ by noting that
		
		\[
		n  \geq (t-1) \cdot \min_{j: 1 \leq j \leq t-1} \card{A'_j} \geq (t-1) \cdot m \ ,
		\]
		
		this gives 
		
		\[
		t \leq \frac{n}{m} + 1
		\]
		
		By definition all the edges between $A'_{j_1},A'_{j_2}$ are contained within $S' = \bigcup_{i=1}^{l} S_i$. We can bound its size
		 
		\[
		\card{S'} = \sum_{i=0}^{i=l-1} \card{\bdry^{B_i} A_{i+1}} = \sum_{i=0}^{i=l-1} \phi_{G[B_i]}(A_{i+1}) \card{A_{i+1}} \leq \epsilon \sum_{i=0}^{i=l-1} \card{A_{i+1}} = \epsilon \card{A'} \leq \epsilon n \ .
		\]
		
	\end{proof}

	Combining Corollary \ref{cor:expansion-concentration} and Proposition \ref{lem:partitioning}, we obtain a general partitioning theorem:

	\begin{theorem}
		\label{thm:no-expander-partitioning}
		Let $G = (V,E)$ be a graph on $n$ vertices for which there is no $U \subset V$, $\card{U} \geq m$ such that $G[U]$ is $\epsilon$-expander, then $G$ can be partitioned into $\{A_i\}_i$ such that 
		
		\begin{enumerate}
			\item $\card{A_i} < 3m$
			\item there at most $c \cdot \log(n) \cdot \epsilon  n $ edges with endpoints in different $A_i$'s
		\end{enumerate}
	\end{theorem}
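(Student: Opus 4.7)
The plan is to obtain this theorem as a direct composition of Corollary \ref{cor:expansion-concentration} and Lemma \ref{lem:partitioning}, with a bit of parameter bookkeeping. The intuition is that the hypothesis "no induced $\epsilon$-expander of size $\geq m$" is a global statement about the expansion behavior of $G$, while Lemma \ref{lem:partitioning} wants a \emph{local} small-scale expansion bound that holds for \emph{every} subgraph. The role of Corollary \ref{cor:expansion-concentration} (i.e.\ the contrapositive of Proposition \ref{prop:expansion-concentration}) is precisely to convert the former into the latter, at the price of a $\log n$ factor in the expansion bound and a $3/2$ factor in the size threshold.

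First, I would note that the hypothesis is hereditary: if some $G' \subset G$ contained an induced subgraph $G'[U]$ with $\card{U}\geq m$ that is $\epsilon$-expander, then $G$ itself would contain such an induced subgraph (since $G'[U] = G[U]$), contradicting the assumption. Therefore every $G' \subset G$ satisfies the hypothesis of Corollary \ref{cor:expansion-concentration}, and we may conclude
\[
\hat{h}_{\frac{3}{2}m}(G') \leq c \cdot \epsilon \cdot \log(n)
\]
for some universal constant $c$, uniformly over all subgraphs $G' \subset G$. This is exactly the input needed to run Lemma \ref{lem:partitioning}, with its parameters reset to $m \mapsto \frac{3}{2}m$ and $\epsilon \mapsto c\cdot \epsilon \cdot \log(n)$.

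Second, I would apply Lemma \ref{lem:partitioning} with these rescaled parameters. The output is a partition $\{A_j\}_j$ of $V$ satisfying $\card{A_j} < 2 \cdot \frac{3}{2}m = 3m$, which gives item (1) of the theorem. The same lemma bounds the number of edges with endpoints in distinct parts by $(c\cdot\epsilon\cdot\log(n)) \cdot n$, which is item (2). Absorbing the constant into $c$ if necessary, we recover the statement exactly.

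There is no real obstacle here beyond tracking which constants come from where. The only subtlety worth flagging is that Lemma \ref{lem:partitioning} requires the small-scale bound to hold for \emph{every} subgraph $G'$ encountered along its recursion (the $G[B_i]$), not just for $G$ itself — which is why the hereditary observation in the first paragraph is needed and why Corollary \ref{cor:expansion-concentration} is phrased as a statement about all $G' \subset G$ rather than about $G$ alone. Once that is in place, the theorem follows immediately.
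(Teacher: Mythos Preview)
Your proposal is correct and matches the paper's approach exactly: the paper simply states that the theorem is obtained by ``combining Corollary \ref{cor:expansion-concentration} and Proposition \ref{lem:partitioning},'' and your write-up carries out precisely this combination with the parameter rescalings $m \mapsto \tfrac{3}{2}m$, $\epsilon \mapsto c\,\epsilon\log n$. Your hereditary observation and the remark about needing the bound for every $G[B_i]$ in the recursion are the right details to spell out, and in fact give more justification than the paper itself provides.
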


\section{Expansion-based bounds on codes}

We are now in position to prove our main result. Using the framework of \cite{baspin2021connectivity,bravyi2010tradeoffs}, we can verify that the following lemma holds:

\begin{lemma}
    Let $G = (V,E)$ be a connectivity graph on $n$ vertices 

    \begin{enumerate}
        \item 
            for a classical code such that its vertices can be partitioned into $A \sqcup_i B_i$, such that there is no two $B_{i'}, B_{i''}$ sharing any edges, then 
    		
    		\begin{enumerate}
    			\item $d \leq  \max_i |B_i|$, or
    			\item $k \leq |A|$
    		\end{enumerate}
        \item 
            for a quantum code such that its vertices can be partitioned into $A \sqcup_i B_i \sqcup_j C_j$, such that there is no two $B_{i'}, B_{i''}$ sharing any edges, nor two $B_{j'}, B_{j''}$ sharing any edges, then 
    		
    		\begin{enumerate}
    			\item $d \leq  \max_i |B_i|, \max_j |C_j|$, or
    			\item $k \leq |A|$
    		\end{enumerate}
    \end{enumerate}
\end{lemma}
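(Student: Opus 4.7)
For the classical part, the plan is the standard restriction argument. Consider the map $\pi_A: \cC \to \F_2^A$ given by $c \mapsto c|_A$. The no-edges-between-$B$'s hypothesis is equivalent to the statement that every check has support contained in $A\cup B_i$ for a single $i$, since a check touching two different $B$'s would create an edge between them. Hence for any $c \in \cC$ with $c|_A = 0$, each restriction $c|_{B_i}$ (extended by zero) is again a codeword, because $\langle h, c|_{B_i}\rangle = \langle h, c\rangle - \langle h|_A, c|_A\rangle = 0$ for any check $h$ whose support meets $B_i$, and is trivially zero for every other check. Either some $c|_{B_i}$ is nonzero, producing a codeword of weight at most $\max_i|B_i|$, or $\pi_A$ is injective and $k \leq |A|$.

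For the quantum case I plan to first prove a union-correctability lemma and then conclude via a symplectic pairing. Assume $d > \max_i|B_i|$. Each $B_i$ is correctable, since any Pauli in the centralizer $\cS^\perp$ supported on $B_i$ has weight less than $d$ and therefore must lie in $\cS$. I would upgrade this to correctability of $\bigcup_i B_i$ as follows: since no stabilizer spans two $B_i$'s---although a stabilizer may simultaneously touch one $B_i$ and one $C_j$---for any $P \in \cS^\perp$ supported on $\bigcup_i B_i$ and any stabilizer $s$, the symplectic product $\omega(P,s)$ reduces to $\omega(P|_{B_i}, s)$ where $i$ is the unique $B$-index meeting the support of $s$ (and is zero if no such $i$ exists). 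Consequently each $P|_{B_i}$ commutes with every stabilizer and is in $\cS$ by $B_i$-correctability, giving $P \in \cS$. The symmetric statement yields correctability of $\bigcup_j C_j$ assuming $d > \max_j|C_j|$.

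To derive $k \leq |A|$ I would define $\phi: \cS^\perp/\cS \to V_A/\pi_A(\cS \cap V_{A\cup\bigcup_j C_j})$, where $V_A$ denotes the space of Paulis on $A$, by $\bar L \mapsto \bar L^B|_A$. Here $\bar L^B$ is any representative of the logical class $\bar L$ supported on $A\cup\bigcup_j C_j$; such a representative exists by the cleaning lemma applied to the correctable region $\bigcup_i B_i$. The map is well-defined because two such representatives differ by an element of $\cS \cap V_{A\cup\bigcup_j C_j}$, and it is injective: if $\bar L^B|_A$ agrees on $A$ with some $s \in \cS$ supported on $A\cup\bigcup_j C_j$, then $\bar L^B - s$ lies in $\cS^\perp$ and is supported on $\bigcup_j C_j$, hence in $\cS$ by correctability of $\bigcup_j C_j$, so $\bar L$ is trivial. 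Since $\dim V_A = 2|A|$, this yields $2k = \dim(\cS^\perp/\cS) \leq 2|A|$. The main obstacle I anticipate is the union-correctability step, because the hypothesis permits stabilizers that span a $B_i$ together with a $C_j$; the argument only goes through because restricting $P$ to $\bigcup_i B_i$ forces the symplectic product with such a mixed stabilizer to factor through its $B_i$-part alone, leaving the $C_j$-part invisible to the computation.
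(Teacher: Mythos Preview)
The paper does not actually prove this lemma; it simply asserts that it follows from ``the framework of \cite{baspin2021connectivity,bravyi2010tradeoffs}''. Your proof is correct and is precisely the standard Bravyi--Terhal argument those references develop: the classical restriction-to-$A$ map together with the observation that each check lives in a single $A\cup B_i$, and in the quantum case union-correctability of $\bigcup_i B_i$ plus the cleaning lemma to push logicals onto $A\cup\bigcup_j C_j$, followed by an injective restriction to $A$.

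One small imprecision worth fixing: in the union-correctability step you write ``no stabilizer spans two $B_i$'s''. This is false for general elements of $\cS$, since products of checks can touch many $B_i$'s; what is true is that no \emph{check} (generator) touches two $B_i$'s, which is exactly what the connectivity-graph hypothesis gives. Your computation already uses only this: you decompose $\omega(P,s)=\sum_i\omega(P|_{B_i},s)$ and observe that for a generator $s$ at most one summand is nonzero, whence $\omega(P|_{B_i},s)=0$ for every generator $s$, and commutation with all of $\cS$ follows. So the argument is sound once ``stabilizer'' is replaced by ``check'' in that sentence.
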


    This lemma, used with Theorem \ref{thm:no-expander-partitioning} gives the following result.
	\begin{theorem}
		\label{thm:no-expander-code-bounds}
		Let $G = (V,E)$ be a connectivity graph for a code $\cC$ on $n$ vertices for which there is no $U \subset V$, $\card{U} \geq m$ such that $G[U]$ is $\epsilon$-expander, then 

        \begin{enumerate}
            \item 
                if $\cC$ is a classical code, it obeys either
		
        		\begin{enumerate}
        			\item $d \leq 3m$, or
        			\item $k \leq 2c \cdot \log(n) \cdot \epsilon n$
        		\end{enumerate}
            \item 
                if $\cC$ is a quantum code, it obeys either

                \begin{enumerate}
        			\item $d \leq 3m$, or
        			\item $k \leq 2(c \cdot \log(n) \cdot \epsilon)^2 n$
        		\end{enumerate}

        \end{enumerate}
		
	\end{theorem}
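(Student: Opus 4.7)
The plan is to apply Theorem \ref{thm:no-expander-partitioning} to $G$ (once for the classical bound, twice recursively for the quantum bound) and then feed the resulting partition into the preceding lemma that translates partitions of connectivity graphs into code parameter bounds.

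First I would apply Theorem \ref{thm:no-expander-partitioning} to $G$ with parameters $m, \epsilon$: the hypothesis produces a partition $V = \bigsqcup_i P_i$ with $|P_i| < 3m$ and at most $c \cdot \log(n) \cdot \epsilon \cdot n$ edges crossing between distinct parts. For the classical case, take $A$ to be the set containing both endpoints of each crossing edge, so $|A| \leq 2 c \log(n) \epsilon n$, and set $B_i \equiv P_i \setminus A$. By construction no two $B_i$'s share an edge, since every cross-part edge has an endpoint that has been pulled into $A$. The classical case of the preceding lemma then gives either $d \leq \max_i |B_i| < 3m$ or $k \leq |A| \leq 2 c \log(n) \epsilon n$, which is the desired bound.

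For the quantum case, the plan is to repeat the partitioning step inside $G[A']$, where $A'$ is the first-level separator. The key observation is that the hypothesis \emph{no induced subgraph on $\geq m$ vertices is $\epsilon$-expander} is inherited by every induced subgraph of $G$: an $\epsilon$-expander on $U \subseteq A'$ with $|U| \geq m$ is in particular an $\epsilon$-expander in $G$. So Theorem \ref{thm:no-expander-partitioning} applies to $G[A']$ with the same parameters and yields a partition $A' = A'' \sqcup \bigsqcup_j C_j$ with $|C_j| < 3m$ and $|A''| \leq 2 c \log(n) \epsilon \, |A'| \leq 4(c \log(n) \epsilon)^2 n$. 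Setting $A \equiv A''$, we obtain $V = A \sqcup \bigsqcup_i B_i \sqcup \bigsqcup_j C_j$ with the $B_i$'s pairwise edge-disjoint (from the first partition) and the $C_j$'s pairwise edge-disjoint (from the second partition). The quantum case of the preceding lemma then yields either $d \leq 3m$ or $k \leq |A| \leq 2(c \log(n) \epsilon)^2 n$ after absorbing small constants.

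The only step that requires real care is the inheritance of the ``no large $\epsilon$-expander subgraph'' hypothesis when we pass to $G[A']$, together with a careful bookkeeping of the constants so that the quadratic blow-up in the quantum case matches the stated bound; everything else is a direct concatenation of Theorem \ref{thm:no-expander-partitioning} with the preceding lemma. The quadratic factor in the quantum case emerges naturally from the two-level recursive partitioning, which reflects the two-sided ``shielding'' required to separate both $X$- and $Z$-type stabilizers in the quantum lemma.
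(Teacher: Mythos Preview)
Your proposal is correct and matches the paper's intended argument, which is stated only as ``this lemma, used with Theorem~\ref{thm:no-expander-partitioning} gives the following result.'' For the classical case you do exactly what is implicit there: partition via Theorem~\ref{thm:no-expander-partitioning}, absorb both endpoints of every crossing edge into $A$, and invoke the lemma. For the quantum case your two-level recursion---partition $V$, pull out $A'$, then partition $G[A']$ and pull out $A''$---is the natural way to produce the required decomposition $A'' \sqcup \bigsqcup_i B_i \sqcup \bigsqcup_j C_j$ and is what the combination of the two cited results amounts to. Your observation that the no-large-$\epsilon$-expander hypothesis passes to induced subgraphs (so Theorem~\ref{thm:no-expander-partitioning} applies again to $G[A']$) is exactly the point that needs checking, and your verification that edges between $C_j$'s in $G$ already lie in $G[A']$ is the other small but necessary step.

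The only loose end is the constant: your count gives $|A''| \le 4(c\log(n)\epsilon)^2 n$ versus the stated $2(c\log(n)\epsilon)^2 n$. Since $c$ is an unspecified universal constant throughout, this is harmless and your remark about absorbing constants is adequate; the paper itself is not tracking this factor carefully.
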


	Finally we can show the existence of large expanders:
	\begin{corollary}
    \label{cor:large-expanders}
		If $\cC$ is a 
            \begin{enumerate}
                \item classical $[n,k,d]$ code then, there exists $\{K_i\}_i, K_i \subset G$, with $G$ a connectivity graph for $\cC$, such that $\sum_i |K_i| \geq \frac{k}{2}$, $|K_i| \geq \frac{d}{3}$, and each $K_i$ is $\Omegalog\left({\frac{k}{n}}\right)$-expander

                \item quantum $[[n,k,d]]$ code then, there exists $\{K_i\}_i, K_i \subset G$, with $G$ a connectivity graph for $\cC$, such that $\sum_i |K_i| \geq \frac{k}{2}$, $|K_i| \geq \frac{d}{3}$, and each $K_i$ is $\Omegalog\left(\sqrt{\frac{k}{n}}\right)$-expander
            \end{enumerate}

	\end{corollary}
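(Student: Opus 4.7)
The plan is to combine a greedy extraction of expanders with Theorem \ref{thm:no-expander-partitioning} and the code lemma just above. Fix an expansion target $\epsilon$ (to be chosen below) and set $m = d/3$. I iteratively select vertex subsets $K_1, K_2, \ldots \subset V$ with $|K_i| \geq m$, $G[K_i]$ an $\epsilon$-expander, and each $K_i$ disjoint from the previously chosen $K_j$'s, continuing until no such $K$ exists in $G[U]$, where $U := V \setminus \bigcup_i K_i$. By Theorem \ref{thm:no-expander-partitioning} applied to $G[U]$, there is a partition $U = \bigsqcup_j A'_j$ with $|A'_j| < 3m = d$ and at most $c\log(n)\epsilon |U| \leq c\log(n)\epsilon n$ edges running between distinct $A'_j$'s.

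For the classical case I would then assemble the hypothesis of the classical partition lemma by taking $A := \bigcup_i K_i \cup \{v \in U : v \text{ is incident to an edge between distinct } A'_j\text{'s}\}$ and $B_j := A'_j \setminus A$. Any cross-$A'_j$ edge has at least one endpoint in $A$, so the $B_j$'s are pairwise edge-separated, and $|B_j| < d$ rules out the first alternative of the lemma. Hence $k \leq |A| \leq \sum_i |K_i| + 2c\log(n)\epsilon n$. Choosing $\epsilon$ of order $k/(n \log n)$ absorbs the cross-edge term into $k/4$ and yields $\sum_i |K_i| \geq 3k/4 \geq k/2$, with each $K_i$ an $\Omegalog(k/n)$-expander of size at least $d/3$.

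For the quantum case the same template applies, but the quantum code lemma demands that the small pieces split into two families $\{B_i\}$ and $\{C_j\}$ each internally edge-separated in $G$. I would arrange this by partitioning a second time: apply the greedy extraction and Theorem \ref{thm:no-expander-partitioning} again to the graph induced on the first-level boundary $S_1 := \{v \in U : v \text{ is incident to an edge between distinct } A'_j\text{'s}\}$, appending the newly extracted expanders to the collection $\{K_i\}$ and producing second-level pieces inside $S_1$ together with a second-level boundary $S_2 \subseteq S_1$. The $B_i$'s are then the first-level interiors $A'_j \setminus S_1$, the $C_j$'s are the second-level interiors inside $S_1 \setminus S_2$, and $A$ collects $\bigcup_i K_i$ together with $S_2$. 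A direct check shows that both families are edge-separated: any $B_i$-$B_{i'}$ cross edge would be a cross-$A'_j$ edge with an endpoint outside $S_1$, and any $C_j$-$C_{j'}$ cross edge would be a cross-second-level edge in $G[S_1]$ with an endpoint outside $S_2$, both impossible. Bookkeeping the two rounds gives $|A| = O\bigl(\sum_i |K_i| + \log(n)^2 \epsilon^2 n\bigr)$, and the quantum lemma then forces $\sum_i |K_i| \geq k/2$ once $\epsilon$ is tuned to be a small enough constant multiple of $\sqrt{k/n}/\log n$.

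The main obstacle I foresee is the two-level partitioning in the quantum case: one must verify carefully that both families remain pairwise edge-separated as subsets of the full graph $G$ (and not merely of $G[S_1]$), and that the size accounting absorbs the extra $\log(n)$ from the second round without degrading the final expansion below $\Omegalog(\sqrt{k/n})$. The classical case, by contrast, is essentially mechanical once the greedy extraction plus partition lemma combination is in place.
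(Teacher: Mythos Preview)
Your proposal is correct and is precisely the argument the paper has in mind: the paper's own proof is the one-line remark ``adaptation of the proof of the main theorem of \cite{baspin2021quantifying}, combined with Theorem~\ref{thm:no-expander-partitioning},'' and what you have written is exactly that adaptation---greedy extraction of $\epsilon$-expanders of size $\geq d/3$, then Theorem~\ref{thm:no-expander-partitioning} on the remainder, then the code lemma (applied once for classical, twice for quantum). The concern you flag about the quantum two-level step is not an obstruction: since $C_j\subset S_1$ any $C_j$--$C_{j'}$ edge already lives in $G[S_1]$, and the second round contributes an extra $\log n$ factor which is absorbed by the $\Omegalog$ notation.
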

	\begin{proof}
		This is an adaptation of the proof of the main theorem of \cite{baspin2021quantifying}, combined with Theorem \ref{thm:no-expander-partitioning}.
	\end{proof}

\section{Converse to the BPT bounds for classical codes}
\label{sec:converse-bpt}


In this section we show how to saturate the BPT bound for classical codes, and this exposition serves as a warm up for the next section which will address spaces not as well behaves as $E^D$. The approach we consider is inspired from \cite{bacon2015sparse}: we start with an arbitrary $[n, \Theta(n), \Theta(n)]$ LDPC code -- which we call $\cC_{\text{in}}$-- and embed it into $\E^D$ while introduced a \emph{controlled} amount of overhead, which leads to a $\param{n}{n^{(D-1)/D}}{n}$ code.

We start by illustrating how our technique works in 2D, and then we sketch how it extends to arbitrary dimensions. We line up the $n$ bits in a 2D plane along the $\hat{y}$ direction, and consider a subset $\{c_i\}_i$ of checks whose support does not overlap. As pictured in Figure \ref{fig:bpt1}, those checks might connect bits that are very far apart. Denote $\omega_i$ the set of bits in the support of $c_i$, then it is possible to extend each of the bits in $\cup_i \omega_i$ into repetition codes along the $\hat{x}$ direction, and braid them such for each $i$,the bits in $\omega_i$ are close to each other -- in 2D, it is always possible to do this braiding with repetition codes of length $\Delta = \Theta(n)$ \cite{bacon2015sparse}. Once this is done it is possible to implement $c_i$ locally by acting on the repetition code extensions of the respective bits.

\begin{figure}[H]
    \centering
    \includegraphics[scale = 0.7, page=1]{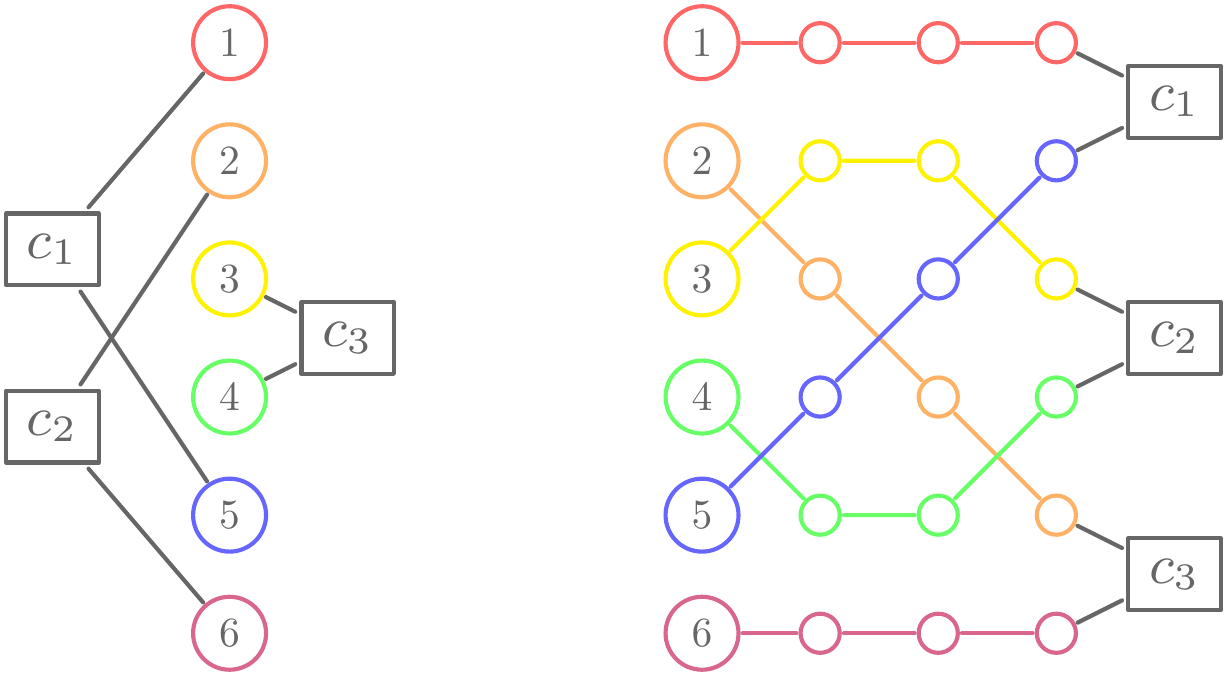}
    \caption{An example of how a code, on the left, with potential long range checks, can be mapped to a code with only 2D local checks on the right. In this case, $\omega_1$ contains bits $1$ and $5$, $\omega_2$ contains bits $2$ and $6$, and $\omega_3$ contains bits $3$ and $4$. The $\hat{y}$ direction goes from the bottom to the top of the page, while the $\hat{x}$ direction goes from left to right.}
    \label{fig:bpt1}
\end{figure}

\begin{figure}[H]
    \centering
    \includegraphics[scale = 0.7, page=2]{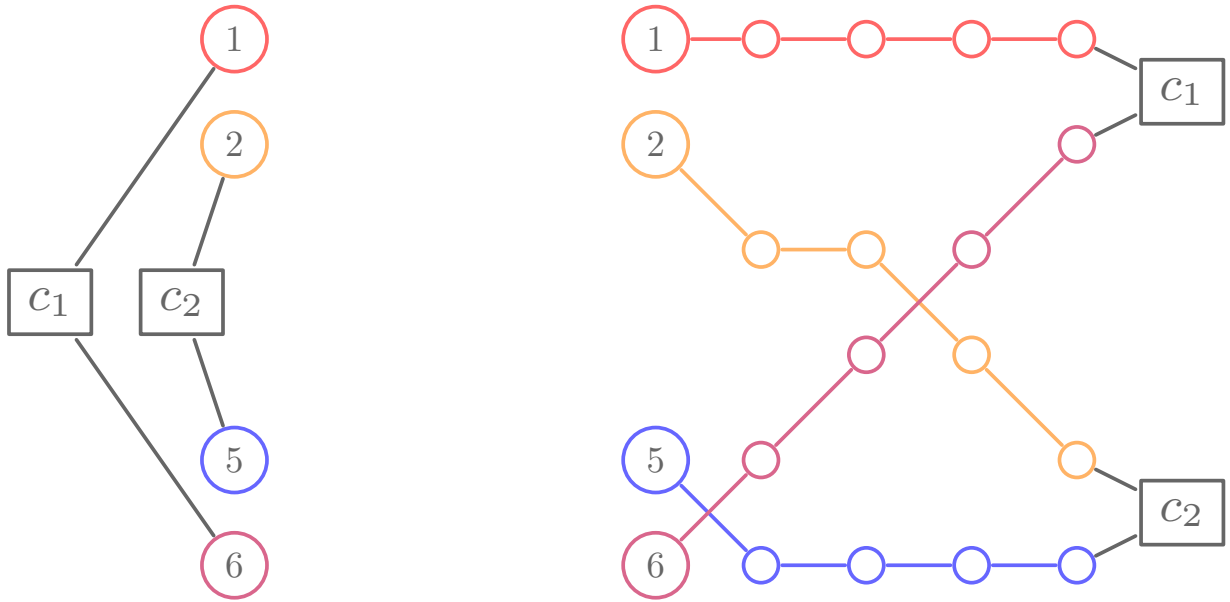}
    \caption{Example for a different code on the same bits.}
    \label{fig:bpt2}
\end{figure}

Now we deal with the situation where checks overlap. Since the code we consider is LDPC, it is always possible to partition the checks into a constant number of classes $\{\gamma_j\}_j$, $\gamma_j = \{c_{i,j}\}_i$ where for a given $j$, the support of the checks in $\gamma_j$ does not overlap. For each class $\gamma_j = \{c_{i,j}\}_{i}$, we repeat the same process outlined previously: denote $\omega_{i,j}$ the set of bits involved in the check $c_{i,j}$, then extend those bits into repetition codes and braid them so to bring the bits in $\omega_{i,j}$ close together; the checks $\{c_{i,j}\}_{i}$ can be enforced locally on the repetition codes. 
Due to the constant number of classes $\gamma_j$, this process is repeated a constant number of times, and yields a constant density of bits and checks. 

\begin{figure}[H]
    \centering
    \includegraphics[scale = 0.7, page=3]{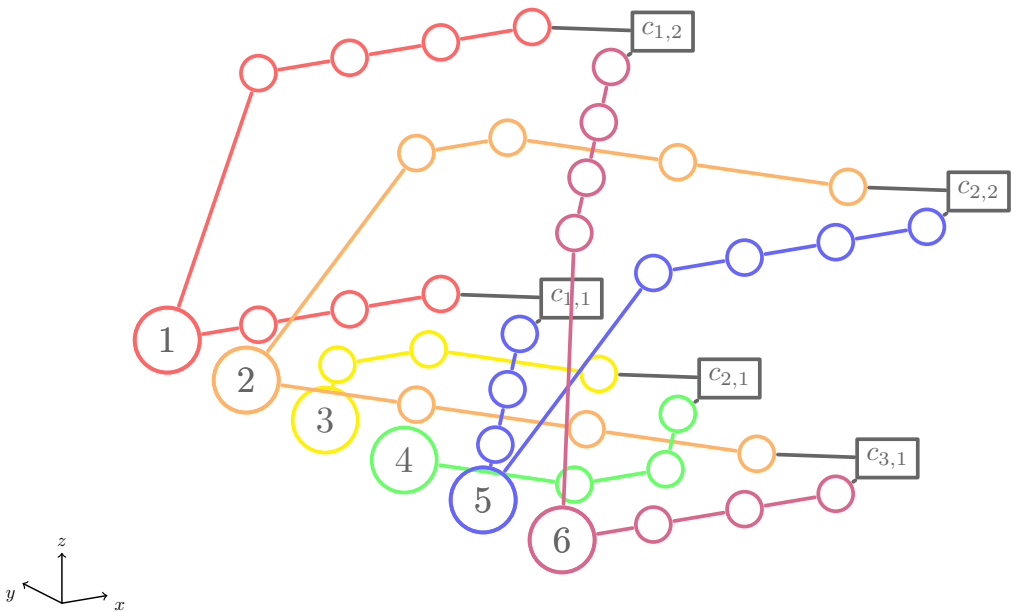}
    \caption{Result of the construction when given as input the same six bits, and the checks in Figure \ref{fig:bpt1} and \ref{fig:bpt2}. Since those checks overlap, we can take $\gamma_1 = \{c_1, c_2, c_3\}$, and $\gamma_2 =\{c_4,c_5\} $. On this figure, $c_1, c_2, c_3$ are relabeled as $c_{1,1}, c_{2,1}, c_{3,1}$, while $c_4$ and $c_5$ are relabeled as $c_{1,2}$ and $c_{2,2}$. Note how each $\gamma_i$ induces a new braiding layer in the $\hat{z}$ direction. When taking an LDPC code as an input, this number of layers stays constant, and as a result can be considered to live in 2D.} 
    \label{fig:bpt3}
\end{figure}

It is easy to verify that this code has $n' \sim n \cdot \Delta$ physical bits, $k' = k$ logical bits, and $d' \geq d \cdot \Delta$. This gives a $[n', \Theta(\sqrt{n'}),\Theta(n')]$ code local in 2D.

This construction can be extended to higher dimensions. We start by filling a $D-1$-dimensional cube with side length $n^{1/(D-1)}$ with the $n$ bits of $\mathcal{C}_{\text{in}}$. The braiding can then be performed in depth $\Delta = \Theta(n^{1/(D-1)})$ \cite{bacon2015sparse}. This gives a code with parameters $[n^{D/(D-1)}, \Theta(n), n^{D/(D-1)}]$, or equivalently, $[n', \Theta((n')^{(D-1)/D}), \Theta(n')]$.

\section{Partial converse to the expansion-based bound for classical codes}
\label{sec:converse-exp}
In this section, we show how, for any space in which a constant degree $n$-vertex $\alpha$-expander can be embedded, that space can also embed a $[{\Theta(n)},{\Omegalog(\alpha^2 n)},{\Omegalog(\alpha^2 n)}]$ classical code. We start by defining formally what a local embedding for a code means for a non-Euclidean space.

\begin{definition}
    \label{def:locally-embeddable}
    An $n$-bit code $\cC$ is said to be locally embeddable in a space $\cM = (X, d)$, if there exists a map $\eta: [n] \rightarrow X$, and universal constants $r,r'$ such that 

    \begin{enumerate}
        \item For any check $c$, the image of the bits involved in $c$ through the map $\eta$ are contained within a ball of radius $r$.
        \item Any ball of radius $r$ in $\cM$ contains the image of at most $r'$ bits.
    \end{enumerate}
\end{definition}

We might also want to define the notion of embedding for a graph. It is almost identical to the previous definition.

\begin{definition}
    \label{def:locally-embeddable-graphs}
    A graph $G = (V,E)$ is said to be locally embeddable in a space $\cM = (X, d)$, if there exists a map $\eta: V \rightarrow X$, and universal constants $r,r'$ such that 

    \begin{enumerate}
        \item For any edge $e$, the image of the vertices involved in $e$ through the map $\eta$ are contained within a ball of radius $r$.
        \item Any ball of radius $r$ in $\cM$ contains the image of at most $r'$ vertices.
    \end{enumerate}
\end{definition}

We can now state the main result of this section more formally.
\begin{proposition}

    Let a constant degree $\alpha$-expander graph $G$ be locally embeddable in a space $\cM$. Then there is a $[{\Theta(n)},{\Omegalog(\alpha^2 n)},{\Omegalog(\alpha^2 n)}]$ classical code that can be locally embedded in $\cM$.
		
\end{proposition}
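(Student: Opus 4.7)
The plan is to mirror the Euclidean layered-code construction from Section \ref{sec:converse-bpt}, replacing the Euclidean braiding inside a $D$-dimensional cube with expander-based multicommodity routing inside $G$. Concretely, I will take a good LDPC seed code $\cC_{\text{in}}$ with parameters $[n_0,\Theta(n_0),\Theta(n_0)]$, where $n_0$ is tuned as $n_0 = \Theta(\alpha^2 n/\mathrm{polylog}(n))$, inject its bits into an arbitrary size-$n_0$ subset of $V(G)$, and turn each (possibly long-range) check of $\cC_{\text{in}}$ into a bundle of short, locally-supported checks strung along paths in $G$.

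Since $\cC_{\text{in}}$ is LDPC, its check-conflict graph has constant degree, so its checks partition into $R = O(1)$ classes $\gamma_1,\dots,\gamma_R$ whose members have pairwise disjoint supports. Making every check local then reduces to a multicommodity routing problem on $G$: for each check $c \in \gamma_j$ of support $\{i_1,\dots,i_s\}$ (with $s=O(1)$), pick a meeting vertex $m_c\in V(G)$ and route $s$ paths from $\iota(i_k)$ to $m_c$. Invoking a standard expander-routing statement on the constant-degree $\alpha$-expander $G$ (for instance in the Broder--Frieze--Upfal / Peleg--Upfal style), one routes the entire family of $M = O(R n_0)$ source--target pairs with dilation $\ell \in O(\log(n)/\alpha)$ and global vertex congestion $O(1)$, provided $n_0 \lesssim \alpha^2 n / \mathrm{polylog}(n)$.

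From this routing I build the final code: its physical bits are the $n_0$ seed bits sitting at $\iota([n_0])$ together with one auxiliary bit per (vertex of $G$, path through that vertex) pair, so constant congestion yields $N = \Theta(n)$ bits. The checks come in two flavors: (a) equality checks between consecutive bits along each routed path, turning each path into a repetition code; (b) at every meeting vertex $m_c$, a local copy of the original check $c$ acting on the $s$ endpoints of the paths that arrive at $m_c$. Both flavors are supported in a ball of $G$ of constant radius, which combined with the ball-capacity bound in Definition \ref{def:locally-embeddable-graphs} yields a local embedding of the new code into $\cM$ in the sense of Definition \ref{def:locally-embeddable}. Quotienting by the repetition-code checks identifies the logical space of the new code with that of $\cC_{\text{in}}$, so $k = k_0 = \Omegalog(\alpha^2 n)$; and any nonzero codeword of the new code projects to a nonzero codeword of $\cC_{\text{in}}$ of weight at least $d_0$, each nonzero coordinate of which forces a full repetition-code world line to be excited, giving $d \geq d_0 = \Omegalog(\alpha^2 n)$.

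The main obstacle I anticipate is the routing step, namely pinning down an expander-routing theorem with the right joint trade-off between dilation, congestion, and the admissible number of source-sink pairs, and making the $\alpha$-dependence in the number of logical bits come out as $\alpha^2$ rather than $\alpha$ or $\alpha^4$. Different classical routing theorems give slightly different polylogarithmic overheads, and one has to pick the variant that lets us simultaneously keep $N = \Theta(n)$ (constant congestion) and $n_0 = \Omegalog(\alpha^2 n)$. A secondary subtlety is checking that the meeting-point checks combined with the repetition-consistency checks do not accidentally admit low-weight codewords that escape the repetition-code ansatz; this I expect to handle with a short argument noting that the repetition checks force each path to be monochromatic, at which point the remaining constraints are exactly those of $\cC_{\text{in}}$.
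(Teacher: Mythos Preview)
Your proposal is correct and follows the same high-level strategy as the paper: start from a good LDPC seed code of size $\Theta(\alpha^2 n/\mathrm{polylog}(n))$, embed its Tanner structure into the expander $G$, string repetition codes along the embedded paths, and read off the parameters. The paper differs from you only in the choice of the key structural lemma: instead of invoking a Broder--Frieze--Upfal / Peleg--Upfal style multicommodity routing theorem with constant congestion, it appeals to the large-minor/immersion theorem of Chuzhoy and Krivelevich--Nenadov (Lemma~\ref{lem:large-minors}), which directly guarantees that a constant-degree $\alpha$-expander on $n$ vertices contains an immersion of any graph $H$ with $|V_H|,|E_H|\le c\,\alpha^2 n/\log n$; it then immerses the Tanner graph of $\cC_{\text{in}}$ itself. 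This buys the paper two things relative to your plan: the $\alpha^2$ factor comes out of the cited lemma for free, so the concern you raise about tuning the routing theorem to hit $\alpha^2$ rather than $\alpha$ or $\alpha^4$ disappears; and the immersion already provides vertex-disjoint subgraphs $X_v$ and vertex-disjoint paths $P_e$, so there is no need for your per-path auxiliary bits or a separate congestion argument. Conversely, your routing formulation is a bit more flexible (it tolerates constant congestion rather than requiring disjointness), and your explicit distance argument via monochromatic world-lines is cleaner than the paper's, which simply asserts the bound by analogy with the Euclidean case.
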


\begin{definition}[Definition 8 of \cite{kleinberg1996short}]
    An immersion of a graph $H = (V_H, E_H)$ in $G = (V_G, E_G)$ is a collection of connected subgraphs of $G$, $\{X_v, v \in V_H\}$, and a collection of paths in $G$ $\{P_e, e \in E_H\}$ such that 

    \begin{enumerate}
    \item The subgraphs $\{X_v\}_v$ are mutually vertex disjoint
    \item If $e \in E_H$ has ends $u,w \in V_H$, then $P_e$ has its endpoints in $X_u$ and $X_w$, then it is vertex disjoint from all other $P_e'$, and all other $X_v$, besides $X_u$ and $X_w$.
    \end{enumerate}
\end{definition}

\begin{lemma}[\cite{chuzhoy2019large,krivelevich2019expanders} ]
    \label{lem:large-minors}
    There exists a universal constant $c$ such that every $n$-vertex $\epsilon$-expander graph of degree at most $\delta$ contains an immersion for any graph $H = (V_H, E_H)$ satisfying $|V_H|, |E_H| \leq c \frac{n}{\log n}\frac{\alpha^2}{\delta^2}$
\end{lemma}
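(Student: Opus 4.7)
The plan is to build the immersion in two phases: first fix $k := \lfloor c \, n \alpha^{2} / (\delta^{2} \log n) \rfloor$ pairwise vertex-disjoint \emph{branch sets} $\{X_v\}_{v\in V_H}$ in $G$ (one per vertex of $H$), then route the $|E_H|$ connecting paths $\{P_e\}_{e\in E_H}$ one at a time, using the residual expansion of $G$ after each step to guarantee the next path is short and vertex-disjoint from everything constructed so far. The target path-length bound is $L := C \log n / \alpha$, so the total vertex budget is $|V_H| + |E_H|\cdot L = O(k L) = O(n \alpha / (\delta^{2}))$, which is a small constant fraction of $n$ once $c$ is chosen small enough relative to $C$.

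For the branch sets, I would exploit the fact that in an $\alpha$-expander of maximum degree $\delta$, every vertex-set $S$ with $|S|\le n/2$ satisfies $|N(S)\setminus S| \ge (\alpha/\delta)|S|$, so BFS balls double in size every $O(\delta/\alpha)$ steps and reach $n/2$ within radius $O(\log n \cdot \delta/\alpha)$. Picking singleton branch sets $X_v = \{r_v\}$ at $k$ roots is then enough, provided the roots can be chosen so that BFS balls of radius $L$ around them do not exhaust the graph. A greedy/random selection produces such roots as long as $k \cdot (\text{typical ball size of radius }L) \le n/2$, which follows from the parameter choice.

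To route the paths, order the edges $e_1,\dots,e_{|E_H|}$ arbitrarily. Before processing $e_i=uw$, let $W_i$ be the set of vertices already consumed by branch sets and by $P_{e_1},\dots,P_{e_{i-1}}$; by the vertex budget above, $|W_i| \le \beta n$ for a small constant $\beta$. I would invoke the standard \emph{residual expansion} lemma: removing a $\beta$-fraction of vertices from a $\delta$-regular $\alpha$-expander leaves a subgraph that still contains an induced subgraph on at least $n/2$ vertices with expansion $\Omega(\alpha - \beta \delta)$ (this is exactly where the $\delta^{2}$ in the denominator is spent — one factor of $\delta$ converting edge-expansion to vertex-expansion, one factor for absorbing the removed vertices). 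Growing BFS balls from $X_u$ and $X_w$ inside this residual expander, each ball hits $n/4$ vertices within radius $L/2$, so by pigeonhole they intersect, yielding a path of length $\le L$ which we set as $P_{e_i}$ and add to $W_{i+1}$.

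The main obstacle is the quantitative bookkeeping across all $|E_H|$ routing steps: a naive argument would require the residual expander to persist with the \emph{same} parameters throughout, but each removed vertex can in principle damage expansion additively, and $|E_H|L$ removals is a lot. The delicate point, which is the heart of \cite{chuzhoy2019large, krivelevich2019expanders}, is to couple the path length $L$ to the residual-expansion bound so that the two reinforce each other: shorter paths $\Rightarrow$ fewer removed vertices $\Rightarrow$ better residual expansion $\Rightarrow$ shorter next path. Carrying this feedback loop through, with the right constants, yields exactly the quantitative bound $|V_H|,|E_H| \le c \, n\alpha^{2}/(\delta^{2}\log n)$ claimed in the lemma; tightening the constants to match Krivelevich's argument verbatim is the only nontrivial calculation.
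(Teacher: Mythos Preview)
The paper does not prove this lemma at all: it is quoted verbatim as a black-box result from \cite{chuzhoy2019large,krivelevich2019expanders}, with no accompanying argument. So there is no ``paper's own proof'' to compare against.

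Your sketch is a reasonable outline of the Krivelevich-style greedy embedding argument (singleton branch sets, then route edges one by one through residual expanders, with BFS-ball diameter bound $O(\delta\log n/\alpha)$), and you correctly flag that the only real content is the quantitative feedback between path length and residual expansion. That said, as written it is a proof \emph{plan}, not a proof: the ``residual expansion lemma'' you invoke is stated loosely (removing a $\beta$-fraction of vertices does not automatically leave an $\Omega(\alpha-\beta\delta)$-expander on half the vertices without further work), and the feedback-loop paragraph defers the actual calculation to the cited papers rather than carrying it out. Since the present paper also defers to those references, this is not a defect relative to the paper --- but if you intend this as a standalone proof you would need to fill in the residual-expansion step carefully, which is where the $\alpha^{2}/\delta^{2}$ genuinely arises.
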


We can use Lemma \ref{lem:large-minors} to embed $\param{n}{n}{n}$  codes into spaces at the cost of an overhead that only depends on some combinatorial property of the target space. This is in stark constrast with the embedding theorem from the previous section that strongly relies on the geometric structure of $\mathbb{E}^D$. 
For the sake of definiteness, the graph $H$ that will be immersed by Lemma \ref{lem:large-minors} is the Tanner graph of the input code $\cC_{\text{in}}$.

\begin{definition}
    Given a code $\cC$ and a set of checks $\{c_i\}_i$ for $\cC$, the Tanner graph $G$ has a vertex for every bit and every check, and there is an edge between a bit and a check if that bit is in the support of that check.
\end{definition}

It is easy to see that for an LDPC code, the Tanner graph has $O(n)$ vertices and edges. Using Lemma \ref{lem:large-minors} we can therefore embed the Tanner graph of an LDPC code of size $\Omega(\frac{n}{\log n}\frac{\alpha^2}{\delta^2})$. This immersion can then be lifted into a code by the following steps:

\begin{enumerate}
    \item All vertices in the subgraphs $\{X_v\}_v$ and the paths $\{P_e\}_e$ are mapped to bits.
    \item All the bits within a single $X_v$ are joined by a repetition code. The resulting code is local with respect to $G$ since $X_v$ is connected, i.e. there exists a path visiting every bit at least once.
    \item For every bit $q$ involved in a check $c$, with corresponding subgraphs $X_q$ and $X_c$, we use the path $P_e$ associated with the edge $(q,c)$ to extend $X_q$ into a repetition code all the way to $X_c$. At the boundaries between $X_q$ and $P_e$ we use a check between a bit in $X_q$, and a bit in $P_e$, and at the junction of $P_e$ and $X_c$ we use a check acting on two bits of $X_c$, and one bit of $P_e$.
\end{enumerate}

By adapting the argument from the previous section, it is straightforward to verify that $k,d \in \Omega(\frac{n}{\log n}\frac{\alpha^2}{\delta^2})$. Given that $n' \in \Theta(n)$, this gives a code with parameters $[n', \Omega(\epsilon^2 \frac{n'}{\log n'})],\Omega(\epsilon^2 \frac{n'}{\log n'})]$. Further, since $G$ is embeddable in $\cM$, then so is the resulting code. 

Those parameters however fall short of matching the bound \ref{cor:large-expanders}, which would require parameters scaling as $[n', \Omega(\epsilon n'), \Omega(n')]$ to be saturated.

\section*{Acknowledgements}
The author would like to thank Anthony Leverrier, Chinmay Nirkhe, Anirudh Krishna for many insightful comments.

N.\,B. is supported by the Australian Research Council via the Centre of Excellence in Engineered Quantum Systems (EQUS) project number CE170100009, and by the Sydney Quantum Academy.

\newpage

\printbibliography


\end{document}